\newtheorem{theorem}{Theorem}
\newtheorem{lemma}{\hspace{0pt} Lemma}
\title{An Efficient and Truthful Pricing Mechanism for Team Formation in Crowdsourcing Markets}
\author{
Qing Liu$^*$, Tie Luo$^{\dagger}$, Ruiming Tang$^*$,  St{\'e}phane Bressan$^\S$\\

\fontsize{10}{10}\selectfont\rmfamily
$^*$School of Computing, National University of Singapore\\
$~^{\dagger}$Institute for Infocomm Research, A*STAR, Singapore\\
$^\S$National University of Singapore and CNRS, Image \& Pervasive Access Lab (IPAL UMI), Singapore\\

\fontsize{9}{9}\selectfont\ttfamily\upshape
\{liuqing,tangruiming,steph\}@nus.edu.sg, luot@i2r.a-star.edu.sg
}
\begin{document}
\maketitle
\thispagestyle{empty}

\begin{abstract}
In a crowdsourcing market, a requester is looking to form a team of workers to perform a complex task that requires a variety of skills. Candidate workers advertise their certified skills and bid prices for their participation.  We design four incentive mechanisms for selecting workers to form a valid team (that can complete the task) and determining each individual worker's payment. We examine profitability, individual rationality, computational efficiency, and truthfulness for each of the four mechanisms. Our analysis shows that TruTeam, one of the four mechanisms, is superior to the others, particularly due to its computational efficiency and truthfulness. Our extensive simulations confirm the analysis and demonstrate that TruTeam is an efficient and truthful pricing mechanism for team formation in crowdsourcing markets.
\end{abstract}


%
\IEEEpeerreviewmaketitle

\section{Introduction}
Future crowdsourcing platforms need to support collaboration ~\cite{kittur2013future}. In a collaborative crowdsourcing market, a requester is looking to form a team of workers who can perform a task that requires a set of skills. Interested candidate workers advertise their certified skills and bid prices for their participation. This differs from usual team formation problems (e.g.~\cite{anagnostopoulos2010power,anagnostopoulos2012online,lappas2009finding}) in that it not only considers the skills required by the task and possessed by the workers, but also involves economic incentives and several criteria that guarantee profitability for requesters and workers, social welfare, and truthfulness~\cite{nisan2007algorithmic}. A large body of research on crowdsourcing markets have confirmed the intuition that financial incentives increase workers' interest~\cite{DiPalantinoV09} and effort level~\cite{chen2010knowledge} as well as the attractiveness to experienced workers~\cite{mason2010financial}. However, to the best of our knowledge, existing pricing models (e.g.~\cite{DBLP:conf/www/SingerM13,DBLP:conf/www/SinglaK13,goel2014allocating}) for crowdsourcing platforms only consider individual workers without taking into account teamwork which is crucial in collaborative environments~\cite{kittur2013future}.

In this paper, we design task allocation and pricing mechanisms for selecting a team of workers in crowdsourcing markets. We start by presenting two baseline mechanisms which pay the selected workers the same as their bids.  The first baseline mechanism looks for the lowest-cost team by enumerating all possible teams in a brute force manner. This mechanism is \textit{profitable} for both the requester and selected workers but is not computationally efficient. It also does not ensure {\em truthfulness} or \textit{incentive-compatibility}, which means that workers may not bid their true costs but can cheat to gain higher payments. We refer to this baseline mechanism as \textsf{OPT} as it minimizes total payment if all workers bid truthfully. We show that \textsf{OPT} is profitable, individually rational but not efficient nor truthful.

The second baseline mechanism follows a greedy approach to form a team of workers with low bids and high total expertise. We refer to this mechanism as \textsf{GREEDY} and show that it is efficient, profitable, individually rational but not truthful.

Both baseline mechanisms do not prevent workers from overbidding, which necessitates the need of designing mechanisms in which workers will bid their true costs.

We first adapt the celebrated Vickrey-Clarke-Groves auction to our problem, and refer to this mechanism as \textsf{VCG}. We show that it is profitable, individually rational, truthful but not efficient.

Finally, we design a mechanism that combines the greedy selection rule and a special payment scheme. A selected worker receives a payment that equals to the highest bid she could have placed and been selected. We refer to this mechanism as \textsf{TruTeam}. It possesses all the above desirable properties, i.e., efficiency, profitability, individual rationality, and truthfulness.

Using synthetic scenarios, we evaluate the properties and performance of these four mechanisms. The results show that \textsf{TruTeam} is an efficient, truthful task allocation and pricing mechanism for team formation in crowdsourcing markets. In summary, this paper makes the following contributions:
\begin{itemize}
\item To the best of our knowledge, this is the first study on team formation in collaborative crowdsourcing markets.
\item We formulate the problem of team formation in crowdsourcing as an task allocation and pricing mechanism design problem.
\item We design two baseline and two refined mechanisms and prove profitability, individual rationality, prove or disprove truthfulness for each of the four mechanisms. In addition, we also evaluate their computational complexity.
\item We show via both analysis and extensive simulations that \textsf{TruTeam} is an efficient, profitable, individually rational and truthful mechanism for team formation in crowdsourcing markets.
\end{itemize}

\section{Related Work}
%
%

\subsection{Pricing Mechanisms}

Various models and techniques have been proposed for pricing on crowdsourcing platforms. Budget-feasible mechanisms \cite{singer2010budget} maximize a requester's profit under a budget constraint while satisfying other properties such as truthful bidding. In \cite{DBLP:conf/www/SingerM13}, the authors designed a framework for task allocation and pricing in an online environment. The framework aims to maximize the number of tasks performed under a given budget or to minimize payments for a given number of tasks.
The authors of~\cite{DBLP:conf/www/SinglaK13} designed a no-regret posted price mechanism which bridges the gap between procurement auctions and multi-armed bandits. That mechanism satisfies budget feasibility, achieves near-optimal utility for the requester, and also guarantees that workers bid their true costs.
In~\cite{goel2014allocating}, workers and tasks are modeled as a bipartite graph where an edge $(m,n)$ in the graph represents worker $n$ is willing to perform task $m$. The authors designed a payment mechanism that ensures budget feasibility and one-way-truthfulness while achieving near-optimal utility. Recent work \cite{profit14infocom} considers variable rather than fixed payment in crowdsourcing, as a function of the best worker's effort, and achieves a higher utility than the optimum achieved by fixed payment. This idea was then extended to heterogeneous crowdsourcing environments \cite{hetero14mass} to tackle non-uniform knowledge possessed by different workers.

\subsection{Task Allocation and Team Formation}

The task allocation problem is related to the job scheduling problem which aims at minimizing the load of the machines that have maximal work load. An extended version of the job scheduling problem, in which each job needs to be performed on a set of machines, was proved to be NP-hard~\cite{azar1998line}.

The authors of~\cite{de2007distributed} studied a new variant of the task allocation problem in which the workers are connected in a social network. The workers are assumed to only have local knowledge about resources and hence each task can only be assigned to its neighboring workers. The authors proved this problem to be NP-hard and proposed a max-flow network model to solve it.

Team formation in a non-crowdsourcing environment (e.g. social networks) was studied by \cite{balog2007determining,lappas2009finding}, assuming workers need to communicate when performing a task and communication is costly. The selected team has the skills to complete a task and has minimum communication cost. A different metric, the workload of workers, was studied by \cite{anagnostopoulos2010power} when selecting a team to complete a given task. The authors of~\cite{anagnostopoulos2012online} considered both workload and communication costs when selecting a team to perform a task.

In this paper, we consider pricing mechanisms for the team formation problem, which bridges the gap between budget-feasible mechanisms and the traditional team formation problem. We do not consider workload or communication costs but rather focus on truthfulness which is more important in crowdsourcing environments. In addition, our problem is more general than prior work on budget-feasible mechanisms where a task is always assigned to a single worker.

\section{Model}
\subsection{Requester and Workers}
In our model, a single \emph{requester} posts her task to a crowdsourcing platform. The task has a value $v$ which is the requester's revenue if the task is completed. There is a set of $n$ available \emph{workers} $W=\{w_1, ..., w_n\}$, and the task needs a subset of workers, $S\subseteq W$, to collaborate. When a worker signs up to participate in the task, she should report to the requester what skills she has and how much she expects to be paid. Then the requester selects a set of workers and decides on the payment for each of them.

\textbf{Cost and bid of a worker.} We assume that each worker's cost of doing the task is private information. Each worker $w_i$ has a non-negative cost $c_i\in\mathcal{R}_{\geq0}$ to perform the task, and bids $b_i\in\mathcal{R}_{\geq0}$ when she signs up to do the task. 

We assume that a worker cannot lie about the skills that she has. Some existing crowdsourcing platforms such as MTurk~\cite{mturk} provided qualification test to ensure the validity of workers' skills. A platform can also use work history information to verify one's skill.

\textbf{Utility of the requester.} The requester's utility $U_R$ is the revenue obtained from the completed task, subtracting the payment to the selected workers.
\begin{eqnarray}\label{equ:ur}
U_R =
\begin{cases}
v -\sum_{w_i\in S}p_i & \text{if task is completed} \\
0   & \text{otherwise} \\
\end{cases}
\end{eqnarray}
where $S$ is the set of selected workers and $p_i$ is the payment to worker $w_i$.

\textbf{Utility of a worker.} Worker $w_i$'s utility $u_i$ is the payment she receives subtracting her cost of performing the task.
\begin{eqnarray}
u_i =
\begin{cases}
p_i-c_i & \text{worker $w_i$ is selected} \\
0   & \text{otherwise} \\
\end{cases}
\end{eqnarray}
Both the requester and the workers aim to maximize their respective utilities.

\subsection{Skill Profiles}

The skill profile of the given task is an $l$-dimension vector $s_*=(s_*[1], ..., s_*[l])$, where $s_*[i]=\{0,1\}$ represents that the $i_{th}$ skill is required $(1)$ or not $(0)$. We assume that a maximum of $l$ skills are required for any task. The skill profile of a worker $w_i$ is also an $l$-dimension vector $s_{w_i}=(s_{w_i}[1], ..., s_{w_i}[l])$, similarly defined but representing what skills $w_i$ possesses.

The skill profile of a team $s_T=(s_T[1], ..., s_T[l])$ is defined by a logical OR of the skill profiles of all the individual workers in the team $T$:
\begin{equation}
s_T[j] = \bigvee_{w_i\in T}s_{w_i}[j],\ j=1,...,l
\end{equation}
The team that can complete the task is the team that has all the required skills of the task, i.e.,
\begin{equation}\label{equation:cover-binary}
s_T[j]\geq s_*[j], \ j=1,...,l
\end{equation}

\subsection{Desirable Properties}
\textbf{Computational Efficiency.} The task allocation and pricing mechanism can be executed in polynomial time.

\textbf{Individual Rationality.} No worker is worse off if she is selected to do the task. In other words, each selected worker receives a payment no less than her true cost.

\textbf{Profitability.} The utility of the requester is non-negative.

\textbf{Truthfulness (Incentive Compatibility).}
Bidding her true cost is each worker's dominant strategy. Formally, if $u_i$ and $u_i'$ are the utilities of worker $w_i$ when bidding truthfully and untruthfully, respectively, then a truthful mechanism guarantees that $u_i\geq u_i'$ regardless of what other workers bid.
\begin{theorem}\label{theorem:truthful}
An auction mechanism is truthful if and only if~\cite{singer2010budget}:
\begin{itemize}
\item The allocation rule is monotone: If worker $w_i$ wins the auction by bidding $b_i$, she also wins by bidding $b_i'\leq b_i$.
\item Each winner is paid the threshold price: Worker $w_i$ will not win the auction if she bids higher than this price.
\end{itemize}
\end{theorem}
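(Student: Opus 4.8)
The plan is to prove both implications of the characterization, exploiting the fact that our setting is single-parameter: a worker cannot misreport her skills, so her only strategic variable is the bid $b_i$. Fixing the other workers' bids $b_{-i}$, the mechanism induces an allocation rule $x_i(b_i)\in\{0,1\}$ (selected or not) together with a payment $p_i(b_i)$ that is received only when $x_i(b_i)=1$. A worker with true cost $c_i$ who bids $\beta$ then obtains utility $x_i(\beta)\,(p_i(\beta)-c_i)$, and truthfulness is the requirement that $x_i(c_i)\,(p_i(c_i)-c_i)\ge x_i(\beta)\,(p_i(\beta)-c_i)$ for every $\beta$ and every $b_{-i}$.

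First I would dispatch the ``if'' direction. Assume the allocation is monotone and each winner is paid the threshold price $\tau_i(b_{-i})=\sup\{\beta:x_i(\beta)=1\}$; by monotonicity $w_i$ wins exactly when $b_i\le\tau_i$ (up to the boundary point), and while she wins her payment is $\tau_i$, independent of her own bid. If $c_i\le\tau_i$, bidding truthfully wins and yields $\tau_i-c_i\ge0$; any other bid either keeps her winning at the same payment $\tau_i$ (same utility) or makes her lose (utility $0\le\tau_i-c_i$). If $c_i>\tau_i$, bidding truthfully loses and yields $0$; the only way to alter the outcome is to bid at most $\tau_i$ and win, which gives $\tau_i-c_i<0$. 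So no deviation is profitable, and the mechanism is truthful.

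Next I would prove the ``only if'' direction. Fixing $b_{-i}$ and writing $x(\cdot),p(\cdot)$ for the induced allocation and payment, take costs $c<c'$ and apply truthfulness twice: once for a worker of true cost $c$ who could deviate to $c'$, and once for a worker of true cost $c'$ who could deviate to $c$. Adding the two inequalities cancels the payment terms and leaves $(c-c')\,(x(c')-x(c))\ge0$; since $c-c'<0$ this forces $x(c')\le x(c)$, i.e.\ a lower bid never loses, which is monotonicity. Monotonicity makes the winning set a ray with supremum $\tau_i$. For any two winning bids $\beta_1<\beta_2$, truthfulness with true cost $\beta_1$ gives $p(\beta_1)\ge p(\beta_2)$ and with true cost $\beta_2$ gives $p(\beta_2)\ge p(\beta_1)$, so the payment equals a constant $P$ on the winning set. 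Finally, comparing a truthful bid against a deviation across the threshold: a worker of cost $\tau_i-\varepsilon$ wins truthfully and must not prefer to overbid and lose, so $P\ge\tau_i-\varepsilon$; a worker of cost $\tau_i+\varepsilon$ loses truthfully and must not prefer to underbid and win, so $P\le\tau_i+\varepsilon$. Letting $\varepsilon\to0$ yields $P=\tau_i$, which is precisely the threshold price, and no bid above $\tau_i$ wins.

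The main obstacle I expect is the last step of the ``only if'' direction: pinning the winning payment down to \emph{exactly} the threshold value, rather than merely bounding it, requires the limiting argument above together with care about boundary behaviour (whether $x(\tau_i)$ equals $0$ or $1$, whether $\tau_i$ can be $+\infty$, and whether the winning set is empty) and about ties among equal bids. These are handled by observing that the characterization concerns the threshold value itself, that perturbing a cost by an arbitrarily small $\varepsilon$ on either side of $\tau_i$ is always admissible since costs range over $\mathcal{R}_{\geq0}$, and that the degenerate cases (a worker who never wins, or who always wins) can be checked directly.
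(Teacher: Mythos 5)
Your proof is correct, but it is worth noting that the paper does not actually prove this statement at all: Theorem~\ref{theorem:truthful} is imported verbatim from the cited literature (it is the standard Myerson/Archer--Tardos characterization of truthful mechanisms in single-parameter domains, as used in \cite{singer2010budget}). So there is no in-paper argument to compare against; what you have supplied is a self-contained proof where the authors rely on a citation. Your argument is the canonical one and both directions check out: the ``if'' direction correctly observes that under monotonicity plus threshold payments a winner's payment is bid-independent, so the only effect of a deviation is to flip the win/lose outcome, and neither flip is profitable; the ``only if'' direction correctly derives monotonicity by summing the two incentive constraints for costs $c<c'$ (the payment terms cancel and $(c-c')\bigl(x(c')-x(c)\bigr)\ge 0$ forces $x$ nonincreasing), then shows the payment is constant on the winning ray, and pins that constant to the threshold by the $\varepsilon$-perturbation argument. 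Two small points deserve explicit mention if this were to be included: first, the ``only if'' direction needs the normalization that losers receive zero payment (which your utility expression $x_i(\beta)\,(p_i(\beta)-c_i)$ builds in, and which matches the paper's definition of worker utility); without it one could shift all payments by a constant and truthfulness would survive. Second, your reduction to a single-parameter setting rests on the paper's modeling assumption that skills are verifiable and cannot be misreported --- this is exactly why the characterization applies here, and it is good that you flagged it, since the theorem is false in general multi-parameter settings.
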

\subsection{Design Objectives}
We want to design a task allocation and pricing mechanism for the requester to select a team to complete a task she posts, with the objective of minimizing the total payment to the selected workers, i.e.,
\begin{equation}
S = \arg\min_T \sum_{w_i\in T}p_i
\end{equation}
\begin{equation}
s.t.\  s_T[j]\geq s_*[j],~~j=1,...,l
\end{equation}
In addition, the mechanism should satisfy computational efficiency, individual rationality, profitability and truthfulness.

\section{Mechanisms}

%
%
%
%
%
%
%
%
%
%
%
%
%

\subsection{Optimal Mechanism}
This mechanism selects the team with the lowest total bid that is able to complete the task, by taking a brute-force approach to attempt all the possible $2^n-1$ teams (excluding the empty set) of the $n$ workers. We refer to this mechanism as \textsf{OPT}.

\begin{lemma}\label{OPTlem}
\textsf{OPT} is individually rational, profitable, but not computationally efficient or truthful. Time complexity of \textsf{OPT} is $O(2^n)$.
\end{lemma}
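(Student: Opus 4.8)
The plan is to verify each of the four properties (and the complexity bound) in turn, since they are essentially independent. For \textbf{individual rationality}, note that in \textsf{OPT} every selected worker $w_i$ is paid $p_i = b_i$. If workers bid truthfully then $b_i = c_i$, so $p_i = c_i$ and $u_i = p_i - c_i = 0 \geq 0$; hence no selected worker is worse off. For \textbf{profitability}, I would argue that the requester only runs the mechanism (i.e.\ commits to the task) when a valid team exists whose total bid does not exceed $v$; more carefully, since \textsf{OPT} returns the minimum-total-bid valid team $S$, the requester's utility is $U_R = v - \sum_{w_i \in S} b_i$, and as long as there is at least one valid team with total bid at most $v$, the selected team also has total bid at most $v$, so $U_R \geq 0$. (If no such team exists, the task is not completed and $U_R = 0$ by~(\ref{equ:ur}).)

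For \textbf{computational efficiency}, the claim is that \textsf{OPT} is \emph{not} polynomial-time: the mechanism enumerates all $2^n - 1$ nonempty subsets of $W$, and for each checks validity via~(\ref{equation:cover-binary}) in $O(nl)$ time and compares total bids, giving an overall running time of $O(2^n \cdot nl) = O(2^n)$ up to the polynomial factor, which is exponential in $n$. I would also remark that this exponential cost is not merely an artifact of brute force: the underlying optimization is a weighted set-cover-type problem (the task skills are the universe, each worker a weighted set), which is NP-hard, so no polynomial-time exact algorithm is expected.

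For \textbf{truthfulness}, I would exhibit a counterexample using the characterization in Theorem~\ref{theorem:truthful}: \textsf{OPT} pays each winner her own bid $b_i$ rather than a threshold price, so a selected worker can profitably overbid. Concretely, take a small instance (e.g.\ one required skill, two workers both possessing it, with costs $c_1 = 1$, $c_2 = 3$); bidding truthfully, $w_1$ wins and is paid $1$, for utility $0$, but by bidding $b_1 = 2.9$ she still wins (still the lowest-bid valid team) and is paid $2.9$, for utility $1.9 > 0$. This violates the threshold-payment condition, so \textsf{OPT} is not truthful. The main obstacle is not any single step but making the profitability argument precise about \emph{when} the requester chooses to run the mechanism; I would handle this by stating the natural convention that the task is completed (and payments made) only if the selected team's total bid is at most $v$, which is consistent with~(\ref{equ:ur}) and with the requester's utility-maximizing behavior.
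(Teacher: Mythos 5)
The paper itself defers this proof to its technical report, so there is no in-paper argument to compare against, but your proposal is correct and is exactly the standard argument one would expect: pay-your-bid gives individual rationality (under the no-underbidding assumption $b_i \geq c_i$), the implicit requirement that the winning team's total bid not exceed $v$ gives profitability, brute-force enumeration of the $2^n-1$ nonempty subsets gives the exponential complexity, and an overbidding counterexample refutes truthfulness via Theorem~\ref{theorem:truthful}. The only nit is that your counterexample should also fix a task value $v \geq 2.9$ so that the task is actually completed and the overbidding worker indeed collects her payment.
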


The proof is deferred to \cite{techrep} due to space constraint, as is also the case for Lemma~\ref{GREEDYlem} and Theorem~\ref{thm:truteam}.

\subsection{Greedy Mechanism}
This heuristic mechanism selects the worker with the minimum cost per \textit{marginal skill contribution} until a team that can complete the task is formed or all the workers have been considered. It pays each selected worker her bid.

In the above, $w_i$'s \emph{marginal skill contribution}, $\Delta_i(S)$, is defined with respect to an existing worker set $S$ as the number of uncovered skills that $w_i$ can cover if selected into $S$:
\begin{equation}
\Delta_i(S)=s_{S\cup \{w_i\}}\cdot s_*-s_S\cdot s_*
\end{equation}
where $s_S$ is the skill profile of team $S$, and $s_S\cdot s_*$ is the inner product of vectors $s_S$ and $s_*$. In each iteration, the \textsf{GREEDY} mechanism always selects the worker who has the lowest cost per marginal skill contribution, $\frac{b_i}{\Delta_i(S)}$.

\begin{lemma}\label{GREEDYlem}
\textsf{GREEDY} is computationally efficient, individually rational, profitable, but not truthful. Time complexity of \textsf{GREEDY} is $O(n^2)$.
\end{lemma}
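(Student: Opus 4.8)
The plan is to establish the four assertions one at a time, invoking the characterization of Theorem~\ref{theorem:truthful} for the last of them. I would begin with the running time, since the other claims are short. The key observation is that every worker selected by \textsf{GREEDY} is chosen only if her marginal skill contribution is strictly positive at that moment, $\Delta_i(S)\geq 1$, so she covers at least one previously uncovered required skill; as a task demands at most $l$ skills, the main loop executes at most $\min(n,l)$ iterations. Each iteration evaluates $\Delta_i(S)$ and the ratio $b_i/\Delta_i(S)$ for the at most $n$ not-yet-selected workers and picks the minimiser, which is $O(n)$ work when the profile length $l$ is treated as a constant (and $O(nl)$ in general). Multiplying the two bounds gives $O(n^2)$, hence polynomial time and computational efficiency; this is the step where I expect the bookkeeping to matter most, because a careless accounting of the per-iteration cost of recomputing the $\Delta_i(S)$ would inflate the bound, so I would be explicit about how the marginal contributions are maintained across iterations.

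For individual rationality, I would use the fact that \textsf{GREEDY} pays each selected worker exactly her bid: a worker who reports truthfully has $p_i=b_i=c_i$, so $u_i=p_i-c_i=0\geq 0$, and no selected worker is worse off. For profitability I would appeal to the standing convention (shared by all four mechanisms) that the requester executes the task only when the total payment to the selected team does not exceed the task value $v$; then $U_R=v-\sum_{w_i\in S}p_i\geq 0$ when the task is completed and $U_R=0$ otherwise, so $U_R\geq 0$ in every case.

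To disprove truthfulness I would give an explicit counterexample rather than argue abstractly. Take a task requiring a single skill ($l=1$, $s_*=(1)$) and two workers $w_1,w_2$ who both possess that skill, with true costs $c_1=1$ and $c_2=3$. Both have $\Delta_i(\emptyset)=1$, so \textsf{GREEDY} selects the lower bidder. Reporting truthfully, $w_1$ is selected, paid $1$, and gets $u_1=0$; but if $w_1$ instead bids $b_1=3-\epsilon$ for small $\epsilon>0$ she remains the lowest bidder, is still selected, is paid $3-\epsilon$, and obtains $u_1=2-\epsilon>0$. Hence truthful bidding is not a dominant strategy. Equivalently, in the language of Theorem~\ref{theorem:truthful}, \textsf{GREEDY} pays each winner her bid, which can lie strictly below her threshold (winning) price, so the payment-threshold condition is violated.

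In summary, the only genuinely delicate part of the argument is the complexity accounting — pinning down the $\min(n,l)$ bound on the number of iterations and the cost of maintaining the $\Delta_i(S)$ — while individual rationality and profitability are immediate from the "pay-your-bid" rule together with the non-negative-utility convention, and the small two-worker instance settles non-truthfulness outright.
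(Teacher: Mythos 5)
The paper itself defers the proof of this lemma to its technical report, so there is no in-paper argument to compare against; judged on its own, your proposal is essentially sound. The complexity accounting (at most $\min(n,l)$ productive iterations, $O(nl)$ work per iteration to recompute the marginal contributions, collapsing to $O(n^2)$ with $l$ treated as a constant, consistent with the paper's $O(n^2 l)$ bound for \textsf{TruTeam}) is careful and correct, individual rationality is immediate from the pay-your-bid rule under the paper's standing assumption that $b_i\geq c_i$, and your two-worker, one-skill counterexample is exactly the canonical way to kill truthfulness for a first-price greedy rule: overbidding to just below the runner-up's bid strictly raises the winner's utility, so payments are not threshold prices in the sense of Theorem~\ref{theorem:truthful}.

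The one step where you are leaning on something the paper never states for \textsf{GREEDY} is profitability. You invoke a ``standing convention'' that the requester runs the task only if $\sum_{w_i\in S} b_i \leq v$, but the paper's description of \textsf{GREEDY} contains no such guard: unlike \textsf{TruTeam}, whose pseudocode explicitly checks $v\geq p_i$ before admitting a worker and decrements the remaining value, \textsf{GREEDY} as described simply accumulates low-ratio workers until the skills are covered. Under Eqn.~(\ref{equ:ur}), if the team completes the task with total bids exceeding $v$, the requester's utility is strictly negative. So for the lemma to hold as stated, either the mechanism must be understood to include the same budget check as \textsf{TruTeam} (declining to form the team, hence $U_R=0$, when bids exceed $v$), or that check must be added explicitly. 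You should state this as an explicit feature of the mechanism rather than as an unexamined convention; with that made precise, the rest of your argument goes through.
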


\subsection{VCG-based Mechanism}
We adapt the traditional Vickrey-Clarke-Groves (VCG) mechanism~\cite{nisan2007algorithmic} to our problem, and design a mechanism called \textsf{VCG}.

\textbf{Allocation rule.} \textsf{VCG} selects the team with the lowest total cost, i.e.,
\begin{eqnarray}\label{equ:VCG}
S=\arg\min_{T}\sum_{w_i\in T}c_i
\end{eqnarray}
\[s.t.\  s_S[j]\geq s_*[j],~~j=1,...,l\]

\textbf{Payment rule.} \textsf{VCG} pays each selected worker $w_i$ the difference between the optimal welfare (for the other workers) if $w_i$ was not participating and welfare of the other workers with respect to the selected team:
\begin{eqnarray}
p_i=(\min_{T}\sum_{w_j\in T \wedge w_i\notin T}c_j) - \sum_{w_j\in S \wedge j\neq i}c_j
\end{eqnarray}
\[s.t.\  s_T[j]\geq s_*[j],~~j=1,...,l\]
where $S$ is defined in (\ref{equ:VCG}).

\begin{lemma}
\textsf{VCG} is individually rational, profitable, truthful, but not computationally efficient. Time complexity of \textsf{VCG} is $O(n2^n)$.
\end{lemma}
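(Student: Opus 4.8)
The plan is to verify the four assertions one at a time: individual rationality and profitability are short, truthfulness is the substantive step, and the running-time bound is a direct count that simultaneously witnesses the lack of computational efficiency. Throughout, I would write $\mathrm{cost}(T)=\sum_{w_j\in T}b_j$ for the total of the reported bids of a team $T$, keeping in mind that the mechanism actually acts on the reports (equal to the true costs once truthfulness is established).

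For individual rationality, let $S_{-i}$ attain $\min\{\mathrm{cost}(T): T\ \text{valid},\ w_i\notin T\}$, so the payment rule reads $p_i=\mathrm{cost}(S_{-i})-(\mathrm{cost}(S)-b_i)$. Since $S_{-i}$ is a valid team and $S$ is a minimum-cost valid team, $\mathrm{cost}(S_{-i})\ge\mathrm{cost}(S)$, hence $p_i\ge b_i$; truthfulness (below) then gives $b_i=c_i$, so $p_i\ge c_i$ and $u_i\ge 0$. For profitability, every $p_i\ge 0$, and under the model's standing assumption that the task value $v$ is at least the total payment $\sum_{w_i\in S}p_i$ (an assumption that is genuinely needed here, because VCG payments — unlike those of \textsf{OPT} — are not bounded by the efficient cost), $U_R=v-\sum_{w_i\in S}p_i\ge 0$ follows from (\ref{equ:ur}).

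The heart of the argument is truthfulness, which I would obtain by checking the two conditions of Theorem~\ref{theorem:truthful} (equivalently, by noting that \textsf{VCG} is a Groves mechanism and is therefore dominant-strategy truthful). Fix $w_i$ together with all other bids and set $B=\min\{\mathrm{cost}(T): T\ \text{valid},\ w_i\notin T\}$ and $D=\min\{\mathrm{cost}(T\setminus\{w_i\}): T\ \text{valid},\ w_i\in T\}$; neither quantity depends on $b_i$. The minimum-cost valid team contains $w_i$ exactly when $b_i+D\le B$, i.e. when $b_i\le B-D$, which is monotone in $b_i$ — so the allocation rule is monotone, with threshold price $B-D$. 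Moreover, when $w_i$ is selected the remaining members of $S$ realise the minimum defining $D$, so $\mathrm{cost}(S\setminus\{w_i\})=D$ and the payment is $p_i=B-D$, precisely the threshold; hence both conditions of Theorem~\ref{theorem:truthful} hold and \textsf{VCG} is truthful. The point that requires care — and the main obstacle — is the treatment of ties in the $\arg\min$ of (\ref{equ:VCG}): one fixes a consistent tie-breaking rule (e.g. break toward including $w_i$ when $b_i+D=B$) so that lowering $b_i$ can never remove $w_i$ from the chosen team.

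Finally, the allocation rule (\ref{equ:VCG}) is exactly minimum-weight set cover (the required skills are the elements to be covered and each worker is a weighted set), which is NP-hard, so no polynomial-time implementation exists unless $\mathrm{P}=\mathrm{NP}$ and \textsf{VCG} is not computationally efficient. As specified it enumerates all $2^n-1$ nonempty teams, testing validity via (\ref{equation:cover-binary}) in $O(l)$ time each, to find $S$; it then repeats an analogous enumeration over the $\le 2^{n-1}$ teams avoiding $w_i$ for each of the $|S|\le n$ selected workers to compute $\mathrm{cost}(S_{-i})$. Summing these contributions yields $O(n\,2^n)$ time (absorbing the $O(l)$ factor), which is the claimed bound.
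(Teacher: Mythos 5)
Your proposal is correct, and on the one part the paper actually proves in-text --- truthfulness --- you take a genuinely different route. The paper argues directly on utilities: writing $u_i' = p_i' - c_i$ and expanding the payment rule, the term $(\min_{T}\sum_{w_j\in T \wedge w_i\notin T}c_j)$ is independent of $w_i$'s bid, so $u_i' = (\min_{T}\sum_{w_j\in T\wedge w_i\notin T}c_j) - \sum_{w_j\in S'}c_j \le (\min_{T}\sum_{w_j\in T\wedge w_i\notin T}c_j) - \sum_{w_j\in S}c_j = u_i$, where the inequality holds because $S$ is the minimum-cost valid team under truthful bids. That is the standard Groves one-line argument and needs no case analysis on whether $w_i$ is selected. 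You instead verify the two conditions of Theorem~\ref{theorem:truthful}, identifying the threshold $B-D$ explicitly and checking that the payment equals it; this is heavier but buys a concrete description of the threshold price and makes the monotonicity (and the tie-breaking subtlety) visible, which the paper's argument sidesteps entirely. Your identification $\mathrm{cost}(S\setminus\{w_i\})=D$ when $w_i\in S$ is justified correctly. The remaining properties are deferred by the paper to its technical report, so there is nothing to compare against; your individual-rationality argument ($\mathrm{cost}(S_{-i})\ge\mathrm{cost}(S)$ hence $p_i\ge b_i$) and the $O(n2^n)$ count are both sound. One caveat worth keeping: your profitability argument rests on a ``standing assumption'' that $v\ge\sum_{w_i\in S}p_i$, which the paper's model does not actually state; you are right that something of this kind is genuinely needed, since \textsf{VCG} payments can exceed the minimum team cost and the mechanism as written performs no check against $v$, but you should present it as an added hypothesis rather than attribute it to the model.
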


\begin{proof}
Suppose $u_i,u_i'$, $p_i,p_i'$ are the corresponding utilities and payments of $w_i$, when she bids truthfully and untruthfully, respectively. Let $S,S'$ denote the selected team when $w_i$ bids truthfully and untruthfully, respectively.
\begin{equation}
\begin{split}
u_i'
&=p_i'-c_i\\
&=(\min_{T}\sum_{w_j\in T \wedge w_i\notin T}c_j) - \sum_{w_j\in S' \wedge j\neq i}c_j -c_i\\
&=(\min_{T}\sum_{w_j\in T \wedge w_i\notin T}c_j) - \sum_{w_j\in S' }c_j\\
&\leq(\min_{T}\sum_{w_j\in T \wedge w_i\notin T}c_j) - \sum_{w_j\in S }c_j \\
&= u_i
\end{split}
\end{equation}
which proves the truthfulness. The proof of other properties are deferred to \cite{techrep} due to space constraint.
\end{proof}
\subsection{Efficient and Truthful Mechanism (\textsf{TruTeam})}
\textsf{OPT} is an optimal allocation mechanism if every worker bids truthfully. \textsf{GREEDY} is computationally efficient but not a truthful mechanism. \textsf{VCG} is a truthful mechanism but is not computationally efficient.

In this section, we present the mechanism \textsf{TruTeam} (Mechanism~\ref{TruTeamAlgo}) which satisfies all the four properties (i.e., \textsf{TruTeam} is computationally efficient, individually rational, profitable, and truthful).

\IncMargin{1em}
\begin{algorithm}\label{TruTeamAlgo}

\renewcommand{\algorithmcfname}{Mechanism}
\caption{Efficient and Truthful mechanism for Team formation (\textsf{TruTeam})\label{alg:TruTeam}}
\KwIn{$b_{1\sim n}$, $s_{w_1\sim w_n}$, $v$, $s_*$}
\KwOut{$S$,  $p_{1\sim n}$ }

$S \leftarrow \emptyset$;
$p_{1\sim n}\leftarrow 0$;


\Repeat{$s_{S}$ cover $s_*$ or $W\backslash S=\emptyset$}{
    $w_i \leftarrow \arg\min_{w_i\in W\backslash S }\frac{b_i}{\Delta_i(S)}$;\\
    $W' \leftarrow W \backslash \{S \cup \{w_i\}\}$;\\
    $T \leftarrow S$;\\
    \Repeat{$\Delta_i(T)=0$ or $v-p_i <0$ }{
        $w_j \leftarrow \arg\min_{w_j \in W' \backslash T}\frac{b_j}{\Delta_j(T)}$;\\
        $p_i \leftarrow \max\{\frac{b_j}{\Delta_j(T)}\times \Delta_i(T), p_i\}$;\\
        $T \leftarrow T\cup\{j\}$;\\
    }
    \If {$v\geq p_i$}{
        $S\leftarrow S \cup \{w_i\}$;\\
        $v\leftarrow v - p_i$;\\
    }
    \Else {
        $W\leftarrow W \backslash \{w_i\}$;\\
    }
}
\If{$s_S$ not cover $s_*$}{
    $S \leftarrow \emptyset $;
    $p_{1\sim n}\leftarrow 0$
}

\Return($S$; $p_1,...,p_n$)
\end{algorithm}
\DecMargin{1em}

\textbf{Allocation rule.}
In each iteration, it selects the worker who has the smallest cost per marginal skill contribution, i.e., the lowest $\frac{b_i}{\Delta_i(S)}$. This is the same as \textsf{GREEDY}.

\textbf{Payment rule.} The intuition of the payment rule is to pay each selected worker the highest cost she can report while still being selected~\cite{goel2014allocating}. This is the ``threshold price" stated in Theorem~\ref{theorem:truthful}, and we will show that overbidding under \textsf{TruTeam} does no good to improve a worker's utility.

Now, we explain in detail how to determine the payment to each selected worker. When computing the payment to worker $w_i$, let's see how this mechanism selects a team without $w_i$'s participation. It selects from set $W\backslash \{{T\cup \{w_i\}}\}$ ($T$ is the selected worker set before $w_i$) the worker ($w_{j_1}$) who minimizes the value $\frac{b_j}{\Delta_j(S)}$.
Therefore
\begin{equation}
b_{i}\leq \frac{b_{j_1}}{\Delta_{j_1}(T)}\times {\Delta_i(T)}
\end{equation}
Otherwise, if $\frac{b_{i}}{\Delta_{i}(T)} > \frac{b_{j_1}}{\Delta_{j_1}(T)}$, we would have selected $w_{j_1}$ instead of $w_i$ according to the allocation rule of this mechanism. Therefore, we set the payment to worker $w_i$ equal to this value:
\begin{equation}
p_i'=\frac{b_{j_1}}{\Delta_{j_1}(T)}\times {\Delta_i(T)}
\end{equation}
However, $p_i'$ may not be the highest bid that $w_i$ can report while still being able to be selected, because $T\cup \{w_{j_1}\}$ may not cover all the skills required by the task. Suppose $T\cup \{w_{j_1},w_{j_2},...,w_{j_k}\}$ (without $w_i$) is the set of workers selected according to this mechanism.
We set the payment equal to the following threshold price as described in lines 6-10.
\begin{equation}\label{equ:payment}
p_i=\max_{j\in \{j_1,j_2,...,j_k\}} \{  \frac{b_{j}}{\Delta_{j}(T)}\times {\Delta_i(T)}  \}
\end{equation}
Note that, $T$ is updated every time by including a new worker $w_{j_x}$ ($x=1,2,...,k$), i.e., $T=T\cup \{w_{j_x}\}$.

In order to be profitable, $w_i$ is selected to perform this task only if the task's remaining value is not less than $p_i$ (we also update the set of selected workers as $S=S\cup\{w_i\}$), as shown in lines 11-13. Otherwise we skip $w_i$ and consider next candidate worker as described in lines 14-15.

Repeat the above process until the task can be completed by the set of workers $S$ or all the workers have been considered.

\begin{lemma}
\textsf{TruTeam} is computationally efficient, with a time complexity of $O(n^2l)$.
\end{lemma}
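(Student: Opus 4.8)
The plan is an elementary iteration count for the two nested \texttt{Repeat} loops of Mechanism~\ref{TruTeamAlgo}: bound the number of outer passes, the number of inner passes (lines~6--10) inside one outer pass, and the work done in a single pass of either loop, and then multiply. Throughout I would use the same accounting convention as for \textsf{GREEDY}, under which selecting the best worker from a pool of size $m$ --- an $\arg\min$ of the keys $b/\Delta$ over that pool --- costs $O(m)$.

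First I would show the outer loop runs at most $n$ times. Every pass finishes by executing line~12, which moves the current candidate $w_i$ out of $W\setminus S$ and into $S$, or line~15, which removes $w_i$ from $W$; in both cases $|W\setminus S|$ strictly decreases, and it equals $n$ at the start, so the outer loop makes at most $n$ passes. Line~3 is an $\arg\min$ over the at most $n$ remaining workers, hence $O(n)$, as is the line~16 failure clean-up, so these contribute $O(n^2)$ in total and are dominated by what follows.

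Next I would bound the inner loop, which is where the factor $l$ appears. It begins with $T\leftarrow S$ and, on each pass, inserts one more worker $w_j$ into $T$. As long as some candidate in $W'\setminus T$ has positive marginal contribution, line~7 selects such a worker, so $s_T$ gains at least one previously uncovered required skill; once all required skills are covered we have $\Delta_i(T)=0$, which triggers the loop's exit condition. Hence the inner loop makes at most $l$ passes (the degenerate case in which no remaining candidate can cover a new skill only lets the loop spin harmlessly, and the natural assumption that the worker pool admits a valid team rules it out). Each inner pass performs line~7, an $\arg\min$ over at most $n$ workers, i.e.\ $O(n)$ work, plus constant-type bookkeeping to update $T$ and $p_i$. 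So one outer pass costs $O(n)$ for line~3 plus $O(nl)$ for its inner loop, i.e.\ $O(nl)$; multiplying by the at most $n$ outer passes yields $O(n^2 l)$.

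I expect the only non-mechanical step to be this inner-loop bound: one must observe that line~7 always selects a strictly skill-increasing worker whenever progress is possible, and that $s_T\geq s_*$ forces $\Delta_i(T)=0$, so that the inner loop is driven by skill coverage (at most $l$ rounds) rather than by the number of workers; the treatment of the no-progress corner case is a minor but necessary check. If instead one charges $O(l)$ for each marginal-contribution evaluation, the same argument gives the coarser $O(n^2 l^2)$, and the stated $O(n^2 l)$ is recovered either under the above unit-cost convention for length-$l$ vector operations or by maintaining the uncovered-skill bit-vectors of $S$ and $T$ incrementally.
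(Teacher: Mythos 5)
Your proof is correct and follows the same decomposition as the paper's: an $O(n)$ selection step plus an $O(nl)$ payment-determination step per selected worker, repeated at most $n$ times. You simply make explicit two points the paper leaves implicit --- that the inner loop terminates within $l$ passes because each pass must cover a new required skill (else $\Delta_i(T)=0$), and that the $O(n^2l)$ bound presumes unit-cost (or incrementally maintained) skill-vector operations, without which one would only get $O(n^2l^2)$.
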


\begin{proof}
Selecting the worker who has the minimal value $\frac{b_i}{\Delta_i(S)}$ takes $O(n)$ time. Deciding the payment for the selected worker takes $O(nl)$. Since there are $n$ workers, time complexity of this mechanism is $O(n^2l)$.
\end{proof}

\begin{lemma}
\textsf{TruTeam} is individually rational.
\end{lemma}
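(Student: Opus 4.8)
The plan is to prove the (formally stronger) statement that every worker $w_i$ who remains in the returned set $S$ is paid at least her own bid, $p_i\geq b_i$; individual rationality in the stated form $p_i\geq c_i$ then follows whenever $w_i$ bids truthfully, which is her dominant strategy since \textsf{TruTeam} is truthful. Note first that if the final cover test at the end of Mechanism~\ref{TruTeamAlgo} fails, the algorithm resets $S\leftarrow\emptyset$ and $p_{1\sim n}\leftarrow 0$, so no worker is selected and the claim is vacuous; hence we may assume the returned $S$ covers $s_*$ and concentrate on an arbitrary $w_i\in S$.

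Next I would fix the outer iteration in which $w_i$ is chosen, and let $S$ denote the set of workers already selected at the start of that iteration, so that $w_i=\arg\min_{w\in W\backslash S}\frac{b_w}{\Delta_w(S)}$ and in particular $\Delta_i(S)>0$. The payment loop (lines 6--10) then sets $T\leftarrow S$ and, being a repeat--until loop, performs at least one pass: it picks $w_{j_1}=\arg\min_{w\in W'\backslash T}\frac{b_w}{\Delta_w(T)}$ (with $W'=W\backslash\{S\cup\{w_i\}\}$ and $T$ still equal to $S$) and updates $p_i\leftarrow\max\{\frac{b_{j_1}}{\Delta_{j_1}(T)}\,\Delta_i(T),\,p_i\}$. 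The key step is the comparison at this first pass: since $w_{j_1}\in W\backslash S$ and $w_i$ minimizes $\frac{b_w}{\Delta_w(S)}$ over $W\backslash S$, we get $\frac{b_i}{\Delta_i(S)}\leq\frac{b_{j_1}}{\Delta_{j_1}(S)}$, which, because $T=S$ on this pass and $\Delta_i(S)>0$, rearranges to $b_i\leq\frac{b_{j_1}}{\Delta_{j_1}(T)}\,\Delta_i(T)$. Hence the value assigned to $p_i$ on the first pass is already at least $b_i$, and since $p_i$ is thereafter only updated by a $\max$ with its current value, the final payment satisfies $p_i\geq b_i$. Combined with truthful bidding, $p_i\geq c_i$, so the selected worker's utility $u_i=p_i-c_i\geq 0$.

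The main point requiring care is not a deep obstacle but bookkeeping around the auxiliary set $T$: the clean inequality $b_i\le\frac{b_{j_1}}{\Delta_{j_1}(T)}\Delta_i(T)$ is only exact on the first inner pass when $T=S$, so the argument must anchor on that pass and then invoke monotonicity of the running maximum for the remaining passes. One should also confirm that a candidate $w_{j_1}$ actually exists on that pass, i.e.\ that $W'\backslash T\neq\emptyset$; in the degenerate situations where $w_i$ is the last remaining worker, or where $w_i$ uniquely holds some required skill (so no $w_i$-free team covers $s_*$), the inner loop cannot produce a cover, $w_i$ does not survive into the final $S$, and the claim is again vacuous. Finally, the guard $v\ge p_i$ in lines 11--13 plays no role in individual rationality but is consistent with profitability.
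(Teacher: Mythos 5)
Your proof is correct and follows essentially the same route as the paper's: both derive $b_i\leq\frac{b_{j_1}}{\Delta_{j_1}(S)}\Delta_i(S)$ from the greedy selection rule on the first pass of the payment loop, observe that $p_i$ is a running maximum containing this term so $p_i\geq b_i$, and conclude $u_i=p_i-c_i\geq p_i-b_i\geq 0$ under no-underbidding. Your extra bookkeeping (the reset when the cover test fails, existence of $w_{j_1}$, and anchoring the inequality on the pass where $T=S$) only makes explicit what the paper leaves implicit.
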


\begin{proof}
From the above payment rule, we can see that for any selected worker $w_i$.
 \begin{equation}
\begin{split}
b_i
&\leq \frac{b_{j_1}}{\Delta_{j_1}(S)}\times {\Delta_i(S)}\\
&\leq max_{j\in \{j_1,j_2,...,j_k\}} \{  \frac{b_{j}}{\Delta_{j}(T)}\times {\Delta_i(T)}  \}\\
&= p_i
\end{split}
\end{equation}
We assume a worker will not bid bellow her true cost, i.e. $b_i\geq c_i$ (In fact, we will show in Lemma~\ref{lem:TruTeam} that $b_i=c_i$ ). Therefore, $w_i$'s utility is $u_i=p_i-c_i\geq p_i-b_i\geq0$.
\end{proof}

\begin{lemma}
\textsf{TruTeam} is profitable.
\end{lemma}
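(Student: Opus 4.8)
The plan is to prove profitability by exhibiting a loop invariant on the task-value variable $v$ maintained by Mechanism~\ref{TruTeamAlgo}. Write $v_0$ for the task value given as input; since $v_0$ is the requester's revenue upon completion, $v_0\geq 0$. The key observation is that $v$ is decremented only on line~13, which is reached only inside the \texttt{if}-branch opened on line~11, whose guard is $v\geq p_i$. Hence every update $v\leftarrow v-p_i$ preserves $v\geq 0$, and a short induction on the iterations of the outer \texttt{repeat} loop shows that $v\geq 0$ holds throughout the execution.

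Next I would relate the current value of $v$ to the payments committed so far. A worker $w_i$ is inserted into $S$ only on line~12, in the very same iteration in which line~13 subtracts exactly $p_i$ from $v$. Furthermore, each worker is picked by line~3 at most once --- after being considered she is either added to $S$ on line~12 or removed from $W$ on line~15, while line~3 only draws from $W\setminus S$ --- so the value $p_i$ of a selected worker is never modified afterwards. Therefore the identity $v=v_0-\sum_{w_i\in S}p_i$ holds at every step, and combined with the first invariant it yields $\sum_{w_i\in S}p_i\leq v_0$.

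It then remains to split into two cases at termination. If the returned team $S$ covers $s_*$, the task is completed and, by~\eqref{equ:ur}, $U_R=v_0-\sum_{w_i\in S}p_i=v\geq 0$. If $S$ does not cover $s_*$, lines~16--17 reset $S\leftarrow\emptyset$ and all payments to $0$, the task is not completed, and~\eqref{equ:ur} gives $U_R=0$. In both cases $U_R\geq 0$, establishing profitability.

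The argument requires essentially no computation; the only delicate points are bookkeeping. One must check that the payment computed on lines~6--10 for a worker who is merely tested and then discarded on line~15 never enters the requester's cost --- which holds because $U_R$ sums $p_i$ only over the final set $S$ --- and that $p_i$ is frozen once $w_i$ joins $S$, so that the telescoping identity $v=v_0-\sum_{w_i\in S}p_i$ is exact rather than approximate. I expect this bookkeeping to be the main, though mild, obstacle.
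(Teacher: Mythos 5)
Your proposal is correct and is essentially the paper's own argument, made rigorous: the paper likewise observes that a worker is only added to $S$ when the remaining value $v$ covers her payment, so the total payment never exceeds the task value. Your loop-invariant formalization ($v\geq 0$ and $v=v_0-\sum_{w_i\in S}p_i$ throughout) and the explicit handling of the uncovered-skills case are just careful bookkeeping on top of the same idea.
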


\begin{proof}
Every time when a worker is considered, we check if the remaining value of the task can cover the payment to this worker. If not, the worker will not be selected. This process guarantees that the value of this task is more than the total payment to the whole team.
\end{proof}

\begin{lemma}\label{lem:TruTeam}
\textsf{TruTeam} is truthful.
\end{lemma}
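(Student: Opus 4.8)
The plan is to invoke Theorem~\ref{theorem:truthful} and verify its two conditions for \textsf{TruTeam}: that the allocation rule is monotone, and that each winner is paid exactly her threshold price. Since these two conditions are necessary and sufficient for truthfulness, establishing them suffices.

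First I would prove monotonicity. Suppose worker $w_i$ is selected when bidding $b_i$; I must show she is still selected when bidding any $b_i'\le b_i$. The key observation is that lowering one's bid can only decrease the ratio $\frac{b_i}{\Delta_i(S)}$ at every stage, so $w_i$ will still be the $\arg\min$ whenever she was before (and possibly at an earlier iteration), hence she still reaches the ``consideration'' step. I then need to argue that the profitability check in lines 11--13 does not reject her: the computed payment $p_i$ is determined by the \emph{other} workers' bids and the marginal contributions $\Delta_i(T)$, none of which depend on $w_i$'s own bid, so $p_i$ is unchanged; since $v\ge p_i$ held when she bid $b_i$, it still holds. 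One subtlety is that lowering $b_i$ may cause $w_i$ to be picked at an earlier iteration, changing the set $T$ that precedes her and thus the value of $p_i$; I would need to check that this reordering does not turn a previously-passing profitability test into a failing one — intuitively, being selected earlier only makes more of the task's value available, so the check is still satisfied. This reordering issue is where I expect to spend the most care.

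Next I would verify the threshold-price condition: $w_i$ wins if and only if $b_i$ does not exceed $p_i$ as defined in~(\ref{equ:payment}). The ``$\le$'' direction (if $b_i\le p_i$ then $w_i$ wins) follows from the monotonicity argument together with the fact, already shown in the individual-rationality lemma, that at the true/selected bid we have $b_i\le p_i$; raising the bid up to $p_i$ keeps $w_i$ the minimizer at each relevant step because $p_i = \frac{b_{j_x}}{\Delta_{j_x}(T)}\Delta_i(T)$ for the maximizing index, so $\frac{b_i}{\Delta_i(T)}\le \frac{b_{j_x}}{\Delta_{j_x}(T)}$ for every $j_x$ in the replacement chain. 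For the ``$>$'' direction, if $w_i$ bids $b_i' > p_i$, then at the iteration where she would be considered there is some worker $w_{j_x}$ in the chain $\{j_1,\dots,j_k\}$ with $\frac{b_i'}{\Delta_i(T)} > \frac{b_{j_x}}{\Delta_{j_x}(T)}$, so the greedy rule selects that worker (or an even cheaper one) instead, and by construction the resulting chain $T\cup\{w_{j_1},\dots,w_{j_k}\}$ already covers $s_*$, so $w_i$ is never needed — she loses.

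Finally, having established both conditions, I would conclude by Theorem~\ref{theorem:truthful} that bidding truthfully is a dominant strategy, so in particular $b_i = c_i$, which also closes the gap left in the individual-rationality proof. The main obstacle, as noted, is handling the change in the predecessor set $T$ when a worker's bid decreases: I would address it by arguing that the payment $p_i$ is a function only of the sequence of minimal ratios among workers other than $w_i$ encountered along the way, and that decreasing $b_i$ cannot increase any of those ratios nor the cumulative payments already committed, so the residual value at the moment $w_i$ is considered does not shrink relative to $p_i$.
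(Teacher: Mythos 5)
Your proposal is correct and follows essentially the same route as the paper: both invoke Theorem~\ref{theorem:truthful} and verify monotonicity of the greedy allocation together with the threshold-price property of the payment in~(\ref{equ:payment}). In fact you are more careful than the paper's own (very terse) argument, since you explicitly flag the interaction between a lowered bid, the resulting change in the predecessor set $T$, and the profitability check in lines 11--13 --- a subtlety the paper passes over by declaring monotonicity ``obvious.''
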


\begin{proof}
According to Theorem~\ref{theorem:truthful}, we need to prove that (1) the allocation rule is monotone and (2) the payment to each selected worker $w_i$ is the threshold price $p_i$.

The monotonicity of the allocation rule is obvious, since if $w_i$ is selected, she will also be selected by bidding a smaller value which leads to a smaller cost per marginal skill contribution.

For the threshold price, recall $p_i$ from equation (\ref{equ:payment}).
If $b_i>p_i$, $w_i$ will be placed after the last selected worker,
thus she will not be selected to perform the task. Therefore, $p_i$ is the threshold price.\footnote{Bidders will not underbid, either. The intuition is that if $w_i$ underbids and is selected, her payment will not cover her cost. See \cite{techrep} for the details.}
\end{proof}

\begin{theorem}\label{thm:truteam}
\textsf{TruTeam} is computationally efficient, individually rational, profitable and truthful.
\end{theorem}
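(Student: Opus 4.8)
The statement is the conjunction of the four properties, each of which has already been isolated as a separate lemma immediately above: computational efficiency (time complexity $O(n^2 l)$), individual rationality, profitability, and truthfulness. So the plan is essentially to assemble these pieces and confirm that the hypotheses they rely on are mutually consistent — i.e., that nothing in the payment rule used for one property contradicts an assumption used for another.

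First I would invoke the efficiency lemma: the outer loop runs at most $n$ times, each iteration selects the minimizing worker in $O(n)$ and runs an inner loop that, for each of up to $n$ candidates, evaluates marginal skill contributions $\Delta_j(T)$ in $O(l)$, giving $O(n^2 l)$ overall. Next, individual rationality: for any selected $w_i$, the chain of inequalities $b_i \le \frac{b_{j_1}}{\Delta_{j_1}(S)}\Delta_i(S) \le p_i$ shows the payment dominates the bid, and since we posit $b_i \ge c_i$, the utility $p_i - c_i \ge 0$. Then profitability: the guard in lines 11--13 only admits $w_i$ when the remaining task value covers $p_i$, and decrements $v$ accordingly, so by induction the running sum of payments never exceeds $v$; if the final $S$ fails to cover $s_*$ the mechanism returns the empty allocation with zero payments, which is trivially non-negative for the requester.

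The substantive part is truthfulness, which I would establish via the characterization in Theorem~\ref{theorem:truthful}: monotonicity of the allocation rule plus the threshold-price property of the payment. Monotonicity is immediate since lowering $b_i$ only lowers $\frac{b_i}{\Delta_i(S)}$, so a winning worker keeps winning. For the threshold price, I would argue that $p_i$ as defined in~(\ref{equ:payment}) is exactly the supremum of bids at which $w_i$ is still selected: if $b_i$ exceeded $p_i$, then at every stage of the without-$w_i$ selection some competitor $w_{j_x}$ would have a smaller ratio, so $w_i$ would be pushed past the last worker needed to cover $s_*$ and never chosen; the footnote's remark on underbidding (an underbidding winner's payment fails to cover her cost, so she gains nothing) completes the dominant-strategy argument. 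Combining all four, \textsf{TruTeam} has every desired property, and I would close by contrasting it with the earlier mechanisms (\textsf{OPT}, \textsf{GREEDY}, \textsf{VCG}), each of which lacked at least one of these guarantees.

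The only real obstacle is making the threshold-price claim fully rigorous — in particular, showing that the iterative update of $T$ inside the inner loop genuinely reconstructs the selection sequence that would occur in $w_i$'s absence, so that the $\max$ in~(\ref{equ:payment}) is over the correct set $\{j_1,\dots,j_k\}$; once that correspondence is granted, the rest is bookkeeping.
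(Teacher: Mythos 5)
Your proposal is correct and matches the paper's approach: the theorem is simply the conjunction of the four lemmas proved immediately before it (efficiency in $O(n^2l)$, individual rationality via the bid-to-payment inequality chain, profitability via the remaining-value guard, and truthfulness via monotonicity plus the threshold price of Theorem~\ref{theorem:truthful}), and your sketch reproduces each of those arguments faithfully. The paper defers the formal combination to a technical report, but there is nothing more to it than the assembly you describe.
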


\section{Evaluation}
In this section, we compare the performance of the four mechanisms, namely \textsf{OPT}, \textsf{VCG}, \textsf{GREEDY} and \textsf{TruTeam} in terms of the following metrics:
\begin{itemize}
\item \textbf{Running Time}: the actual CPU time on a computer.
\item \textbf{Requester's Utility}: defined in Eqn. (\ref{equ:ur}).
\item \textbf{Truthfulness}: We verify the truthfulness of \textsf{TruTeam} by evaluating workers' utility if they overbid or underbid.
\end{itemize}

\subsection{Simulation Setup}
We generate two different datasets to evaluate our mechanisms. A \textsf{Small} dataset is used to evaluate all the four mechanisms and a \textsf{Large} dataset is used to evaluate the two computationally efficient mechanisms (i.e., \textsf{GREEDY} and \textsf{TruTeam}). The parameters of the two datasets are listed in Table \ref{table:ex-setting} and explained bellow.

We set the value of the task $v=500$ which is unknown to workers. Each worker's true cost $c_i$ is uniformly drawn from $[C_1, C_2]$. In the case that all workers are truthful, we set $b_i=c_i$. In the case of overbidding, we randomly select $k\in[1,n]$ workers and let each of them overbid a random value $r_i$ (i.e., $b_i=c_i+r_i$) where $r_i\in [1,v]$. We do not consider underbidding, since no rational worker will underbid under these four mechanisms.

To generate each worker's skill profile, we first generate the number of skills she has, using the normal distribution ($u,\sigma$). Suppose $w_i$ has $x$ skills, then we randomly assign $x$ different skills out of all the $l$ skills to her.
\begin{table}
\centering
\caption{Parameters of our two datasets}\label{table:ex-setting}
{\footnotesize 
\begin{tabular}{|c|p{2.0cm}|p{1.9cm}|p{1.0cm}|c|} \hline
        &$n$: no. of workers & $l$: no. of skills & ($u,\sigma$) & $[C_1,C_2]$\\ \hline

\textsf{Small} &  $n=10\sim 25$, fixing $l=5$ & $l=1\sim 10$, fixing $n=20$   &   ($l/3,0.4$) &   $[1,v/5]$             \\ \hline \

\textsf{Large} &  $n=10\sim 3000$, fixing $l=50$ & $l=1\sim100$, fixing $n=1000$  &   ($l/5,0.4$) &   $[1,v]$            \\ \hline

\end{tabular}
}
\end{table}

All the simulations were run on a Windows PC with a 3.40GHz CPU and 8 GB memory. Each data point is averaged over 100 measurements.

\subsection{Results}
Fig.~\ref{fig:Smalldataset} shows the comparison of the four mechanisms conducted over the \textsf{Small} dataset. We observe in Fig.~\ref{fig:Smalldataset} (a,b,c,d) that \textsf{OPT} and \textsf{VCG} do not scale well when the number of workers or skills becomes large. However, all the four mechanisms achieve strictly positive requester's utility (e,f,g,h). 
Therefore, in this section, we focus on the computationally efficient mechanisms, \textsf{GREEDY} and \textsf{TruTeam}, and explain in detail the results pertaining to them collected on the \textsf{Large} dataset.

\begin{figure*}
\subfigure[Truthful bidding; $l=5$]{\label{fig:time-small-worker-truthful}\includegraphics[trim=2mm 2mm 8mm 7mm,clip,angle=0,width=0.23\textwidth]{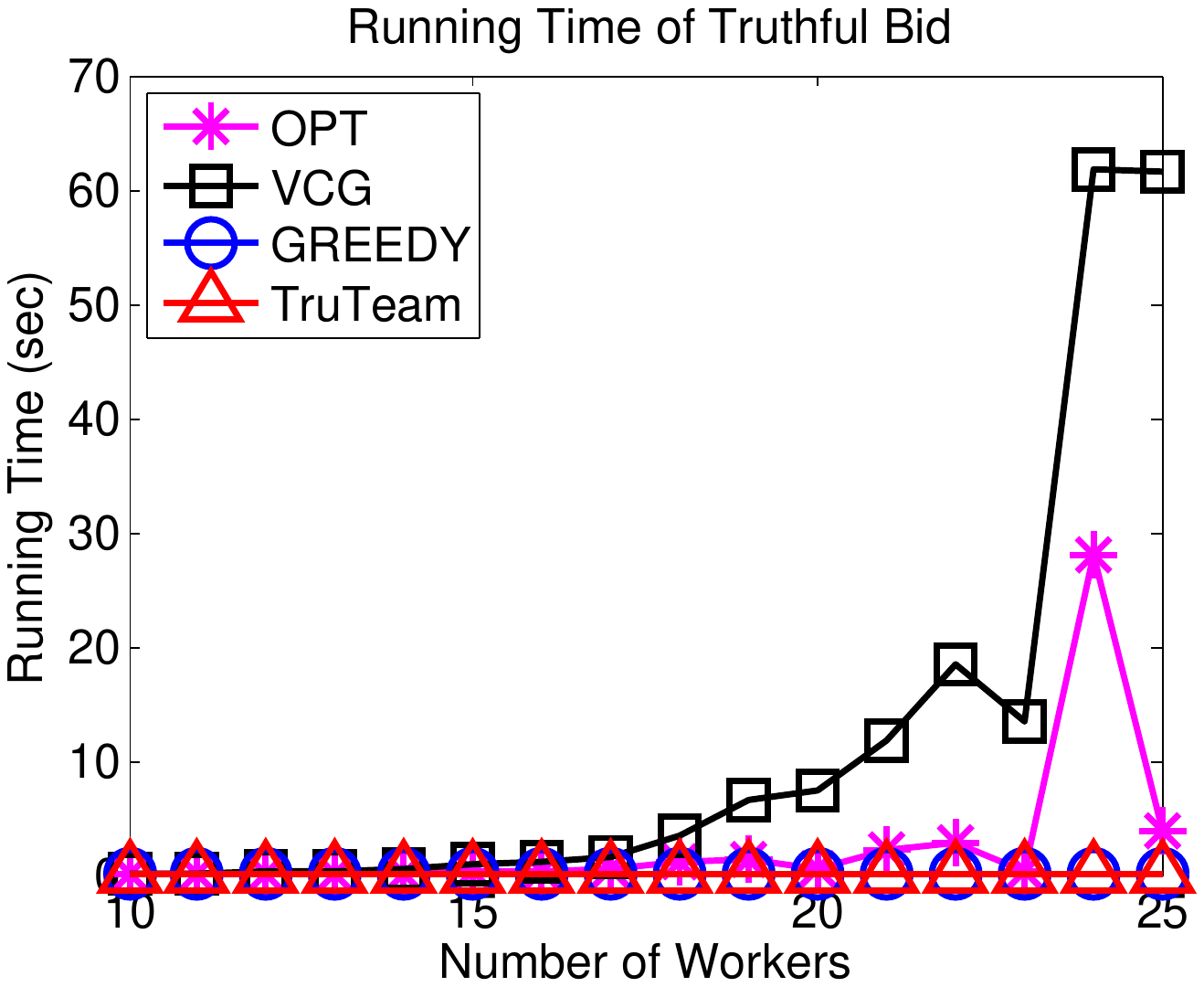}}
\hfill
\subfigure[Overbidding; $l=5$]{\label{fig:time-small-worker-over}\includegraphics[trim=2mm 2mm 8mm 7mm,clip,angle=0,width=0.23\textwidth]{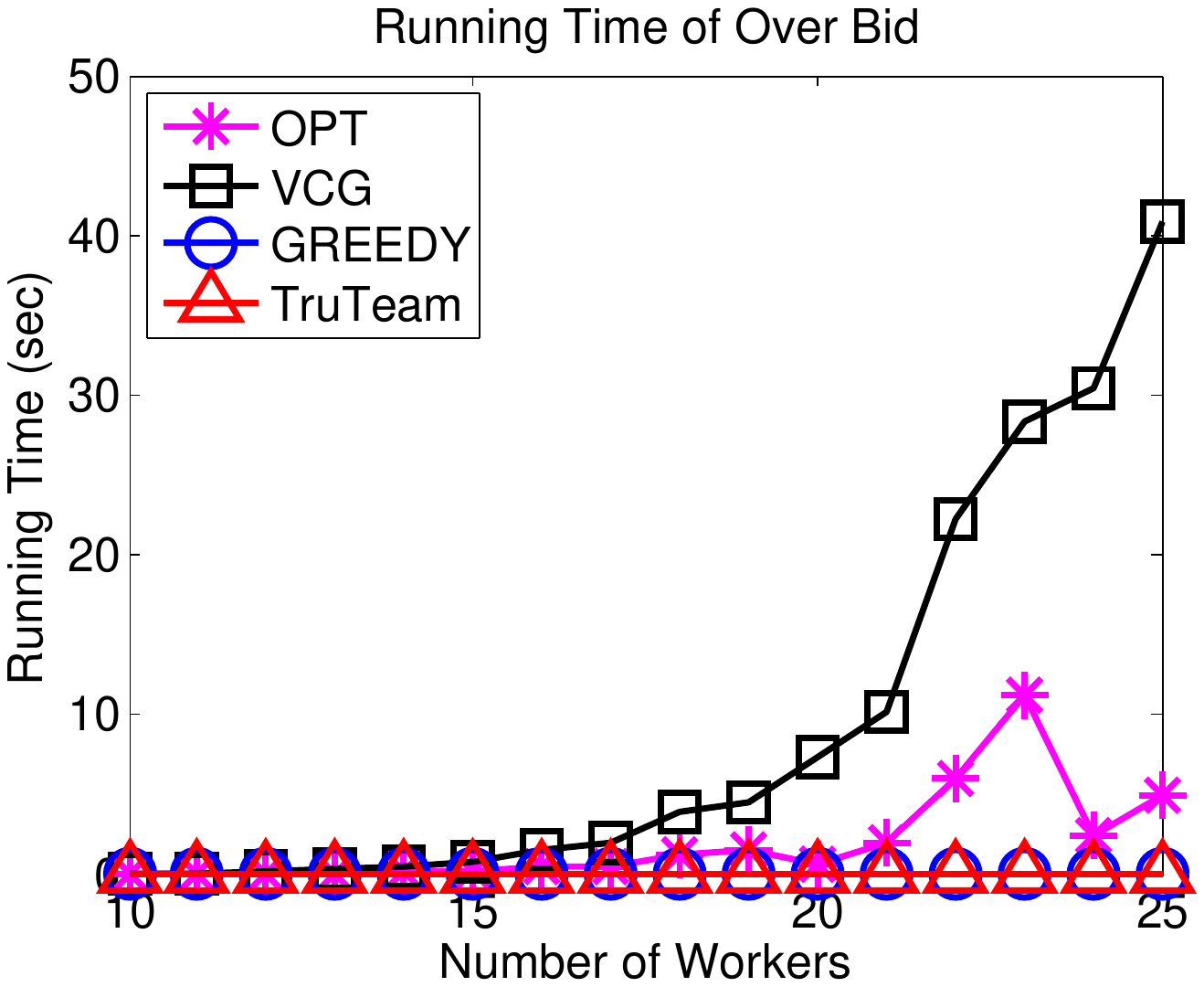}}
\hfill
\subfigure[Truthful bidding; $n=20$]{\label{fig:time-small-skill-truthful}\includegraphics[trim=2mm 2mm 8mm 7mm,clip,angle=0,width=0.23\textwidth]{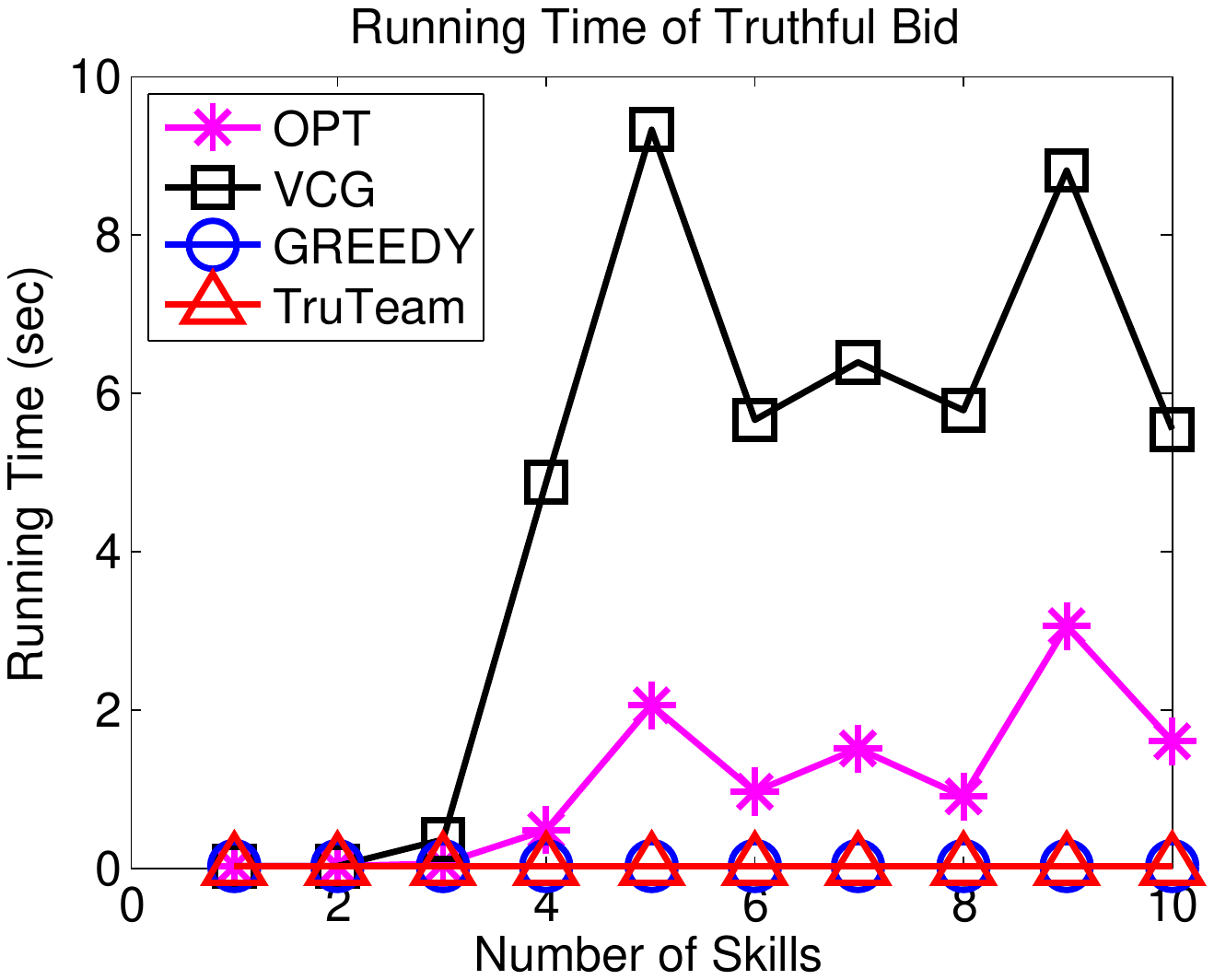}}
\hfill
\subfigure[Overbidding; $n=20$]{\label{fig:time-small-skill-over}\includegraphics[trim=2mm 2mm 8mm 7mm,clip,angle=0,width=0.23\textwidth]{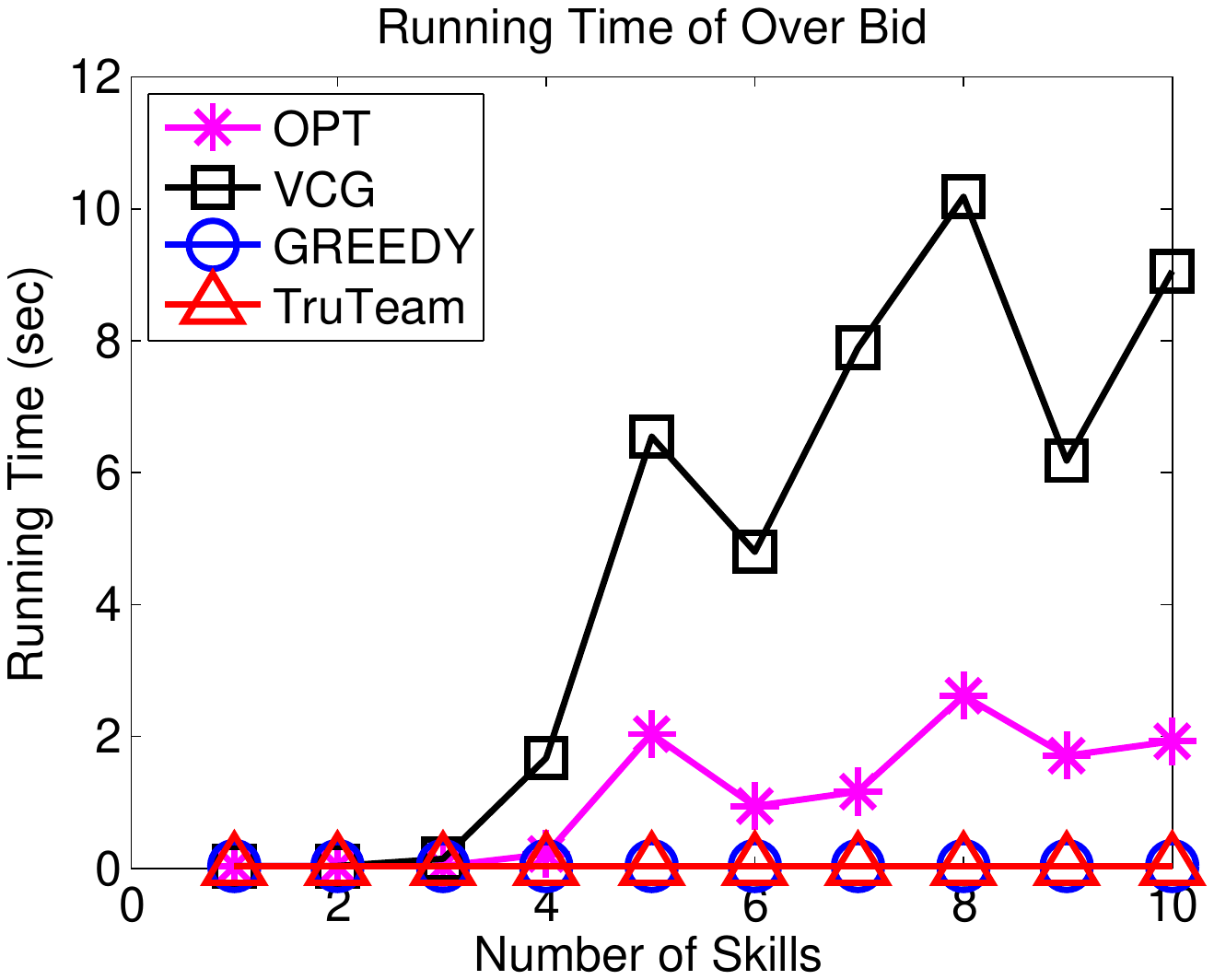}}
\\
\subfigure[Truthful bidding; $l=5$]{\label{fig:utility-small-worker-truthful}\includegraphics[trim=2mm 2mm 8mm 7mm,clip,angle=0,width=0.23\textwidth]{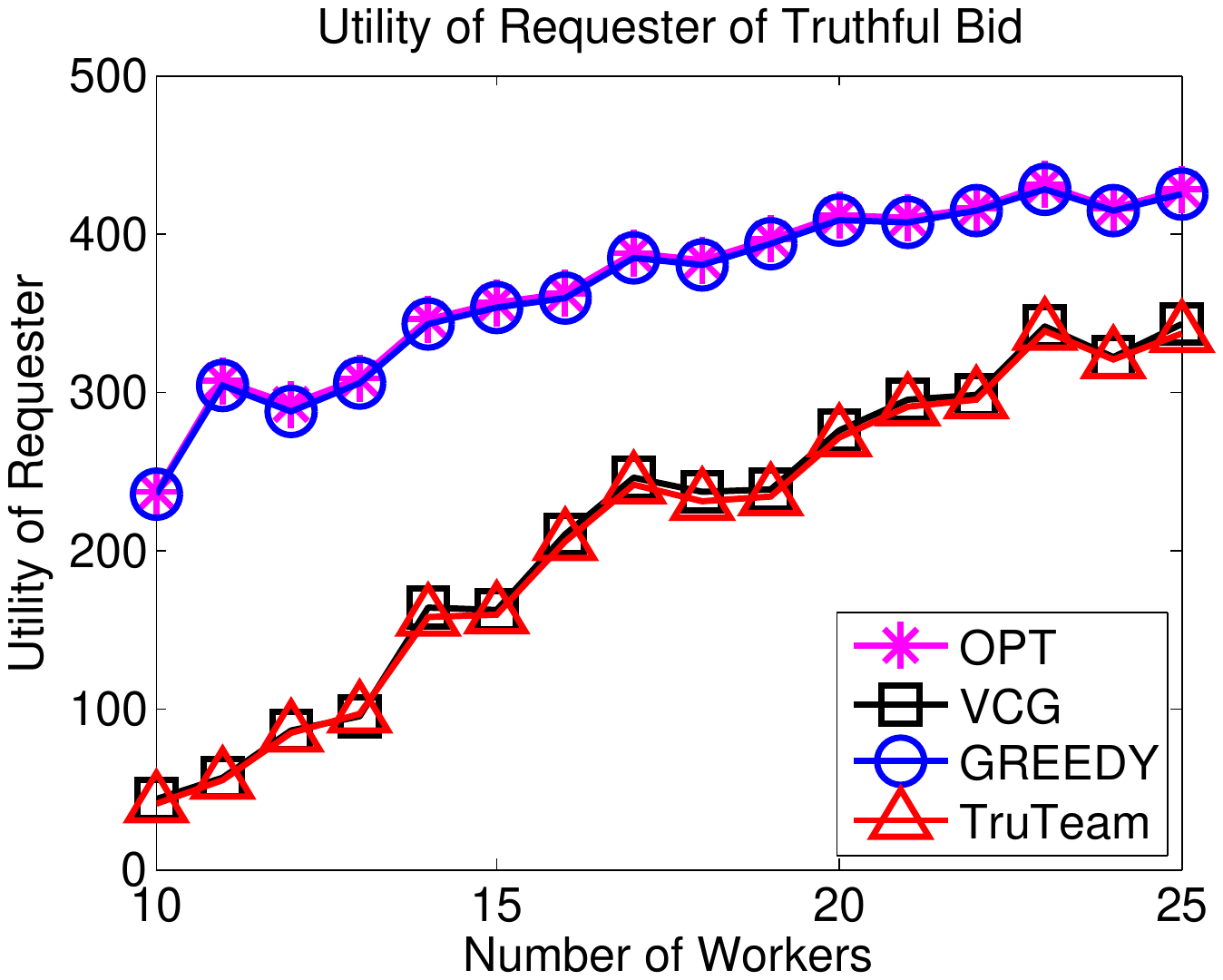}}
\hfill
\subfigure[Overbidding; $l=5$]{\label{fig:utility-small-worker-over}\includegraphics[trim=2mm 2mm 8mm 7mm,clip,angle=0,width=0.23\textwidth]{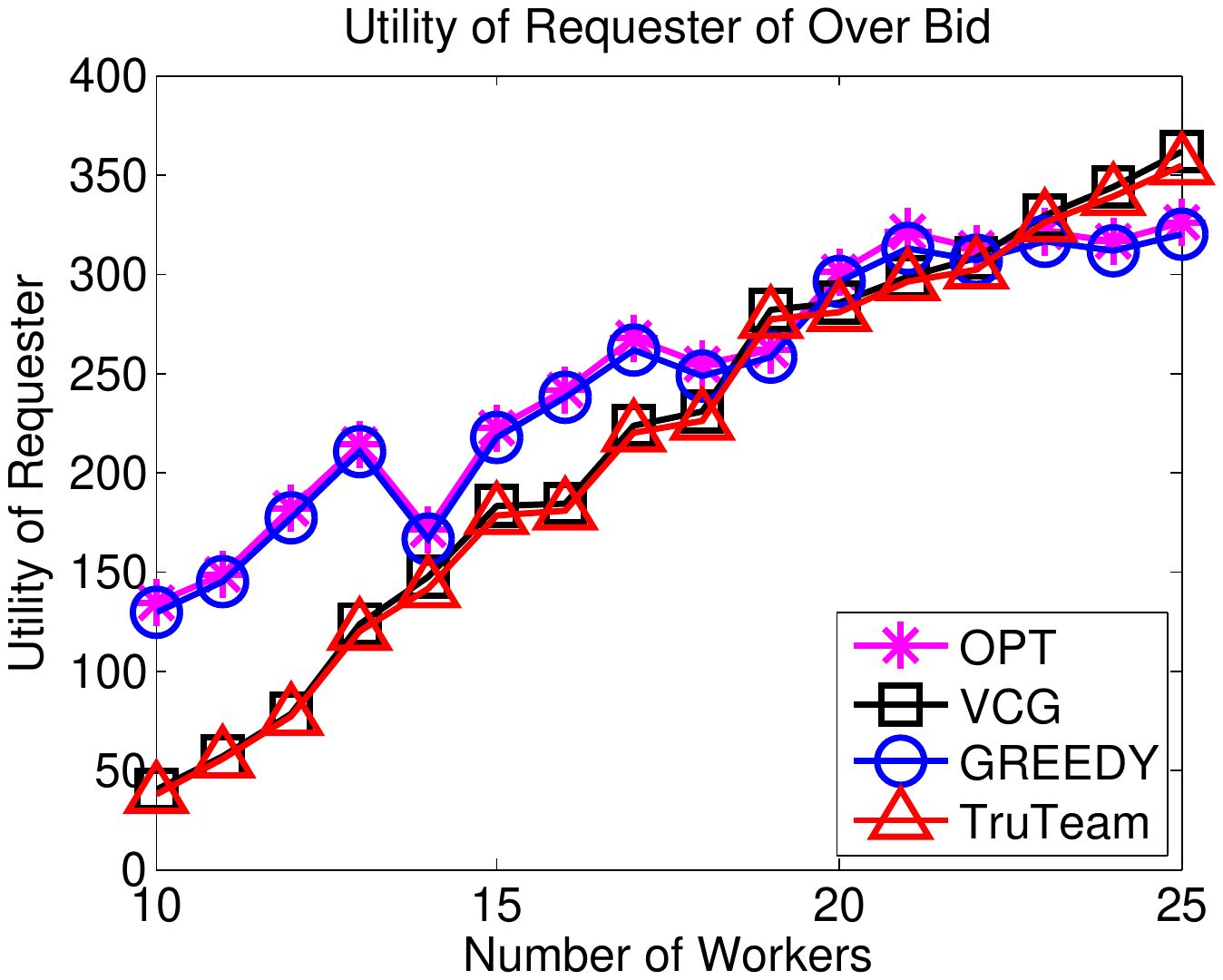}}
\hfill
\subfigure[Truthful bidding; $n=20$]{\label{fig:utility-small-skill-truthful}\includegraphics[trim=2mm 2mm 8mm 7mm,clip,angle=0,width=0.23\textwidth]{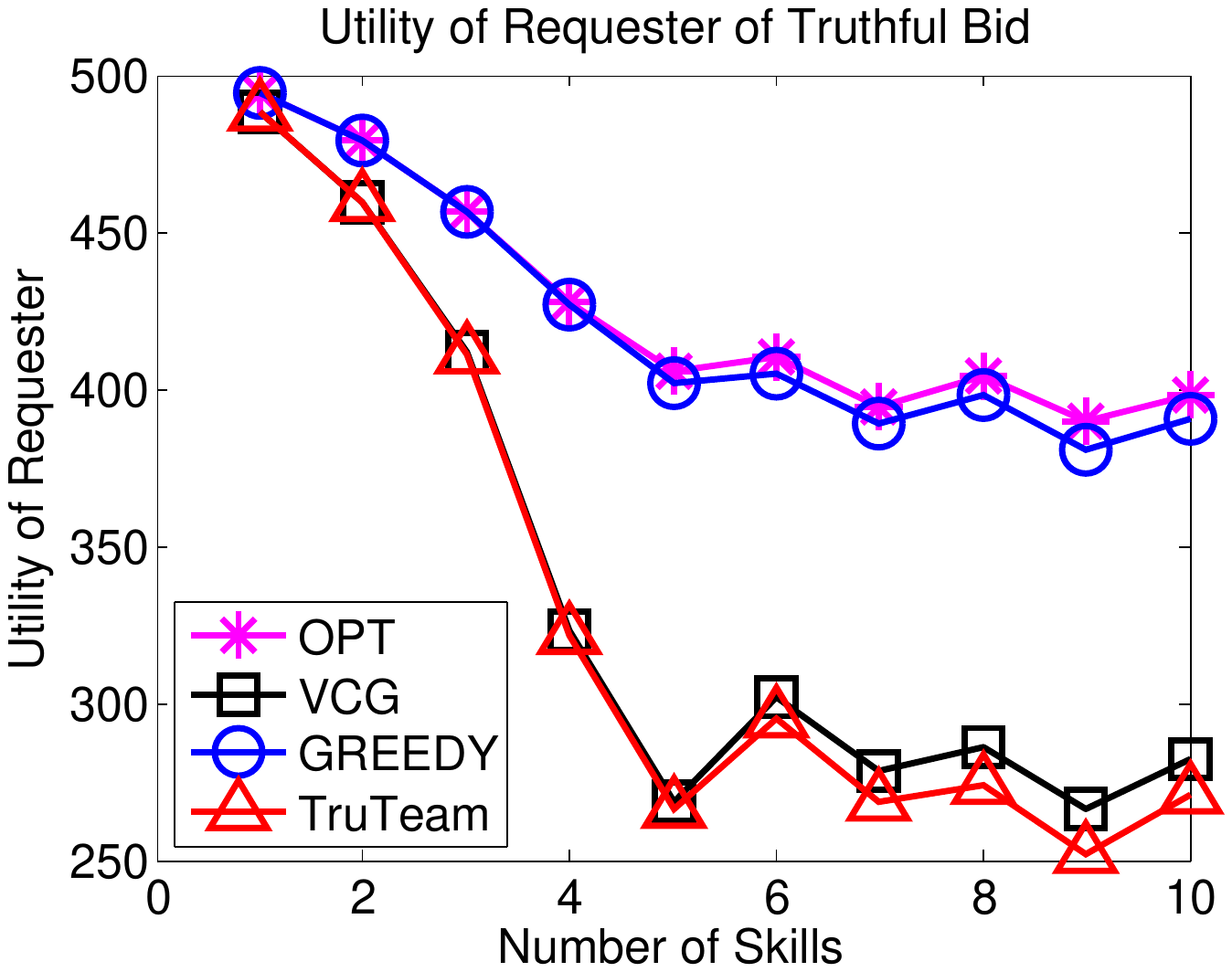}}
\hfill
\subfigure[Overbidding; $n=20$]{\label{fig:utility-small-skill-over}\includegraphics[trim=2mm 2mm 8mm 7mm,clip,angle=0,width=0.23\textwidth]{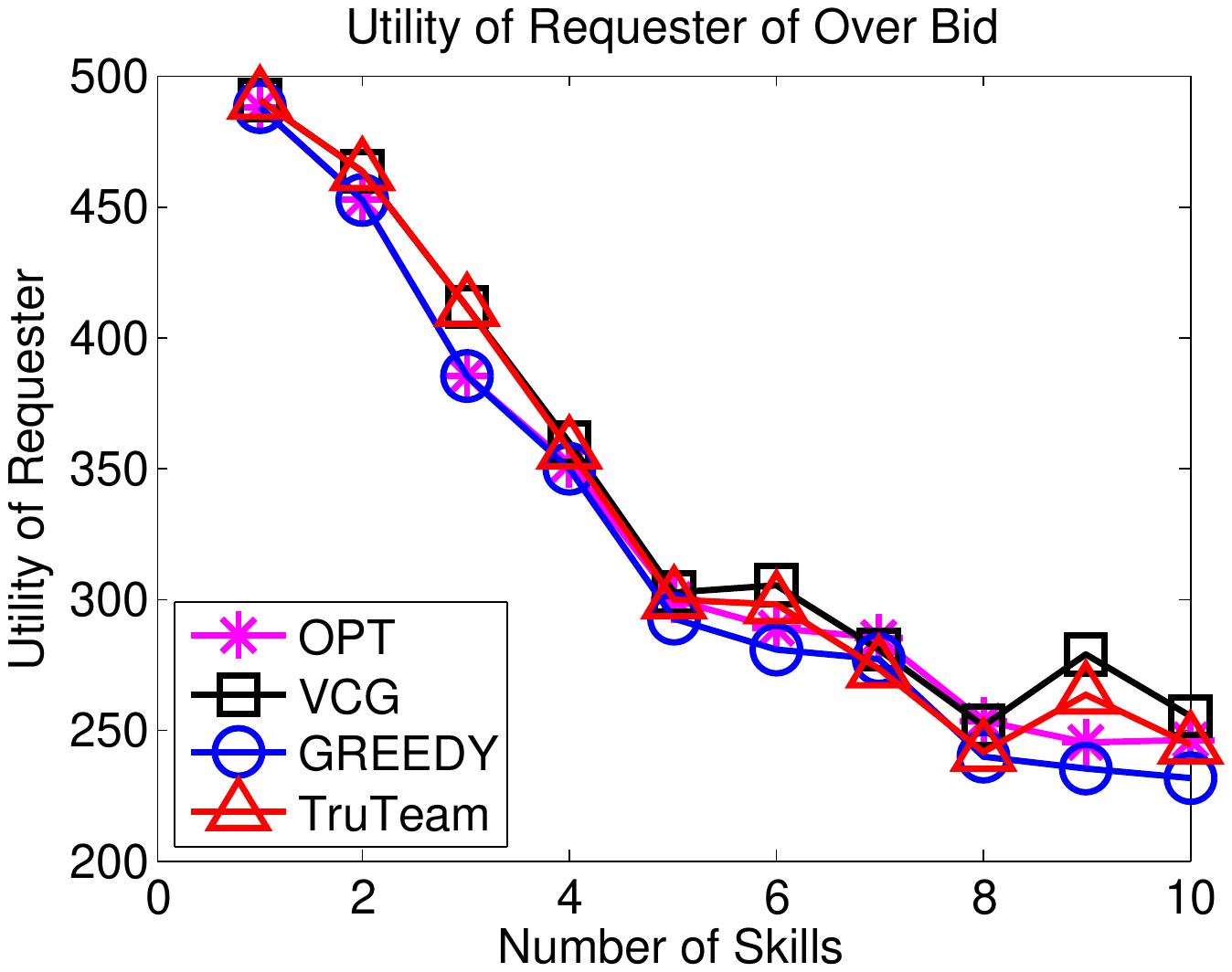}}
\caption{Simulation results on the \textsf{Small} Dataset.} \label{fig:Smalldataset}
\end{figure*}

\subsubsection{Running Time}
The results are presented in Fig.~\ref{fig:Largedataset} (a,b,c,d). 
Generally, as the number of workers or the number of skills increases, running time of both mechanisms increases. Although \textsf{GREEDY} outperforms \textsf{TruTeam}, since the payment determination process of \textsf{TruTeam} is more complicated than that of \textsf{GREEDY}, \textsf{TruTeam} maintains a high efficiency. For example, it completes in about 0.6 second even when the number of worker reaches 3000. More importantly, \textsf{TruTeam} ensures truthfulness which is crucial to incentive mechanisms to counteract possible cheating behaviors in practice.

In Fig.~\ref{fig:time-big-worker-truthful} and \ref{fig:time-big-worker-over}, running time of \textsf{TruTeam} contains a small peak when the number of workers is between 100 and 300. This happens when the task is complex (requiring skills as many as $l=50$) and the number of workers is relatively small ($n<300$). Thus the team size is fairly large and \textsf{TruTeam} needs to check every other worker's bid and skills when deciding the payment to each selected worker, resulting in higher running time.

\subsubsection{Requester's Utility}

Fig. \ref{fig:Largedataset} (e,f,g,h) 
present the requester's utility in different settings.
Generally, the requester's utility increases as the number of worker increases (e,f), since the requester has more ``cheaper'' workers to select from. On the other hand, the requester's utility drops as the number of required skills becomes larger (g,h), which is a natural result of the increase of complexity of the task.

In the case of truthful bidding, as is shown in Fig.~\ref{fig:utility-big-worker-truthful}~\ref{fig:utility-big-skill-truthful}, the untruthful mechanism (\textsf{GREEDY}) yields higher utility than the truthful mechanism (\textsf{TruTeam}) because \textsf{TruTeam} pays each selected worker no less than her bid. 

%

However, in the case of overbidding which is a more realistic setting, Fig.~\ref{fig:utility-big-worker-over} \ref{fig:utility-big-skill-over} show that \textsf{TruTeam} outperforms \textsf{GREEDY}. This demonstrates that in real crowdsourcing markets where workers are strategic and speculating higher payment (e.g., by trying to overbid), \textsf{TruTeam} generates higher profit for the requester by ensuring truthful bidding.

\begin{figure*}
\subfigure[Truthful bidding; $l=50$]{\label{fig:time-big-worker-truthful}\includegraphics[trim=2mm 2mm 8mm 7mm,clip,angle=0,width=0.24\textwidth]{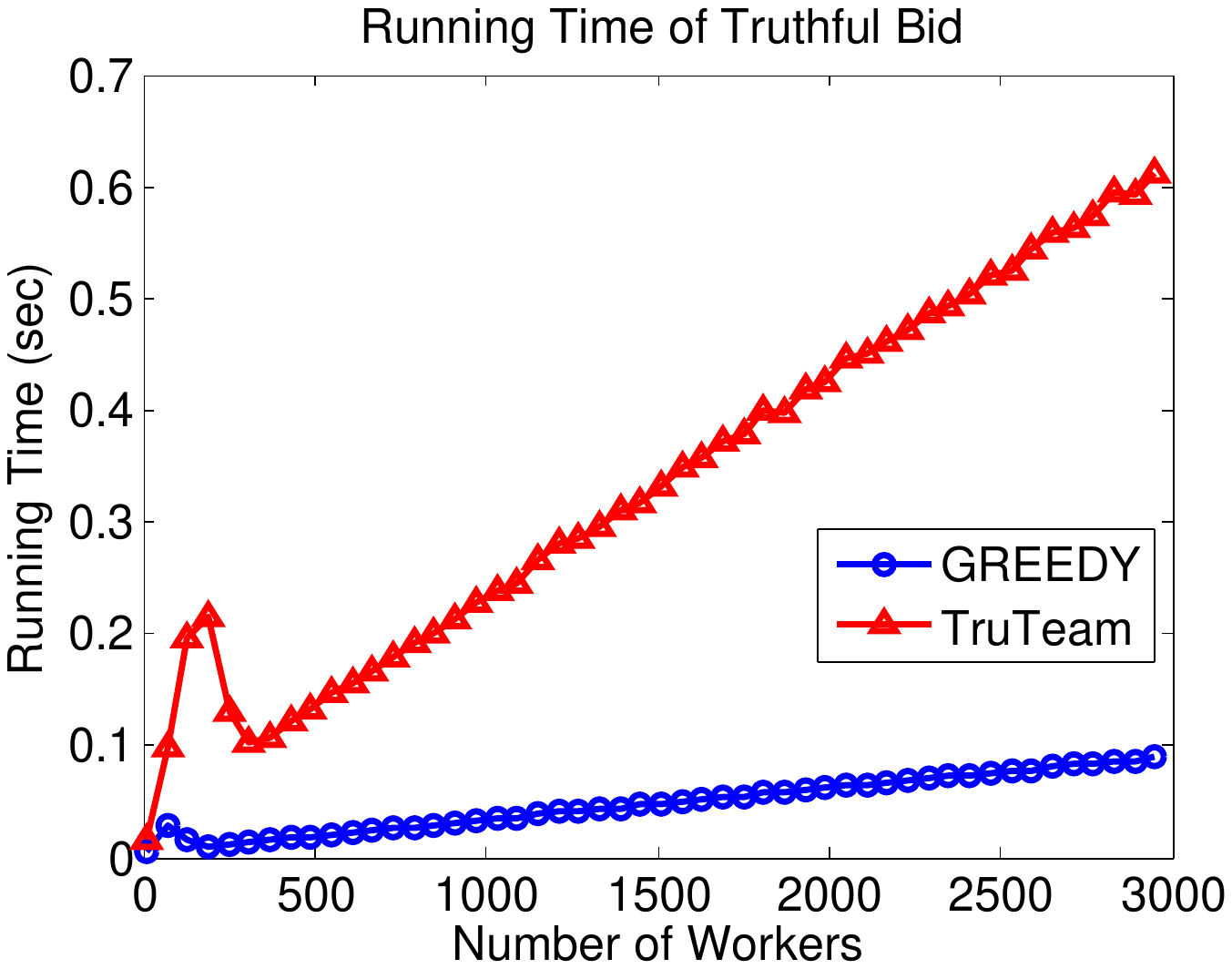}}
\hfill
\subfigure[Overbidding; $l=50$]{\label{fig:time-big-worker-over}\includegraphics[trim=2mm 2mm 8mm 7mm,clip,angle=0,width=0.24\textwidth]{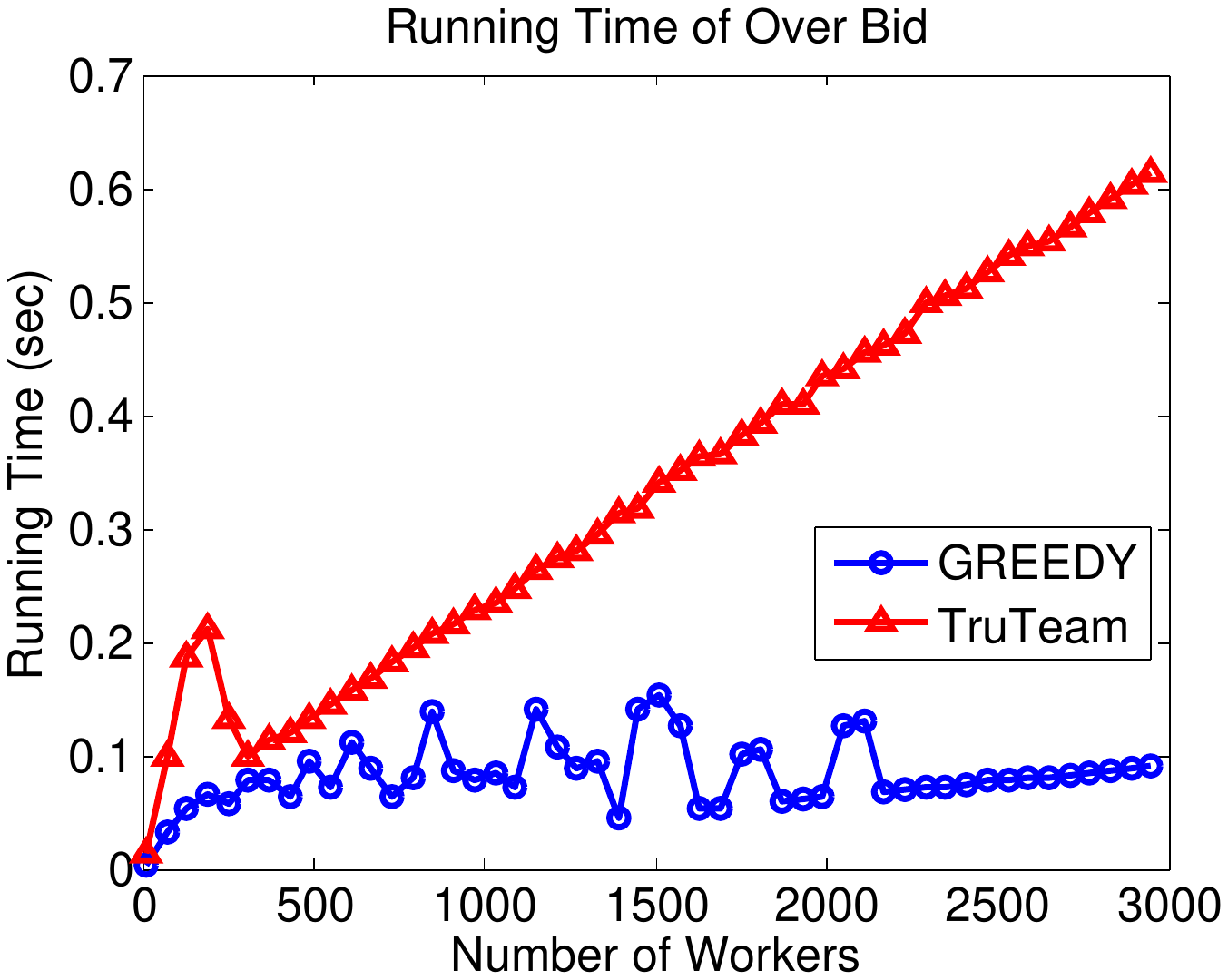}}
\hfill
\subfigure[Truthful bidding; $n=1000$]{\label{fig:time-big-skill-truthful}\includegraphics[trim=2mm 2mm 8mm 7mm,clip,angle=0,width=0.24\textwidth]{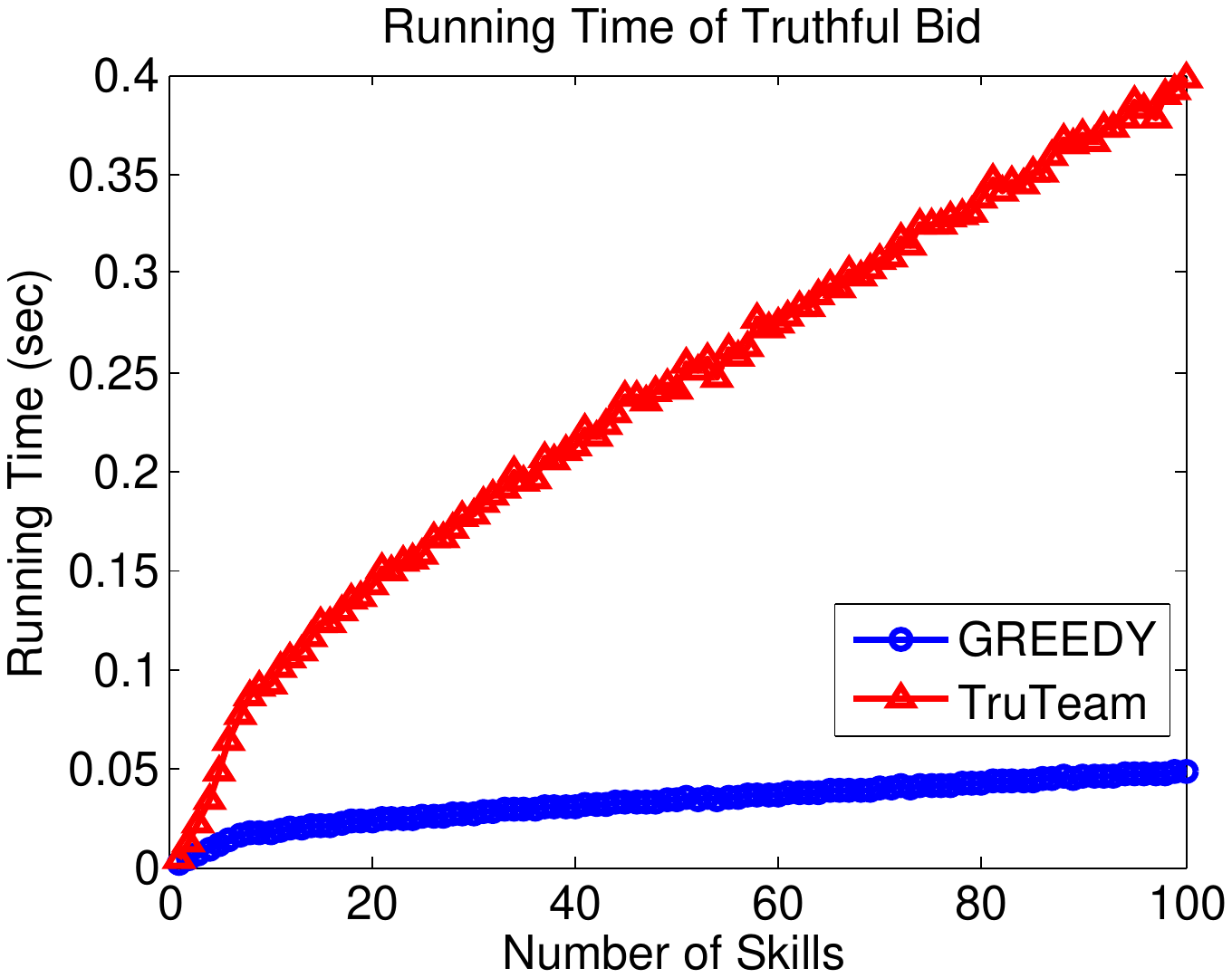}}
\hfill
\subfigure[Overbidding; $n=1000$]{\label{fig:time-big-skill-over}\includegraphics[trim=2mm 2mm 8mm 7mm,clip,angle=0,width=0.24\textwidth]{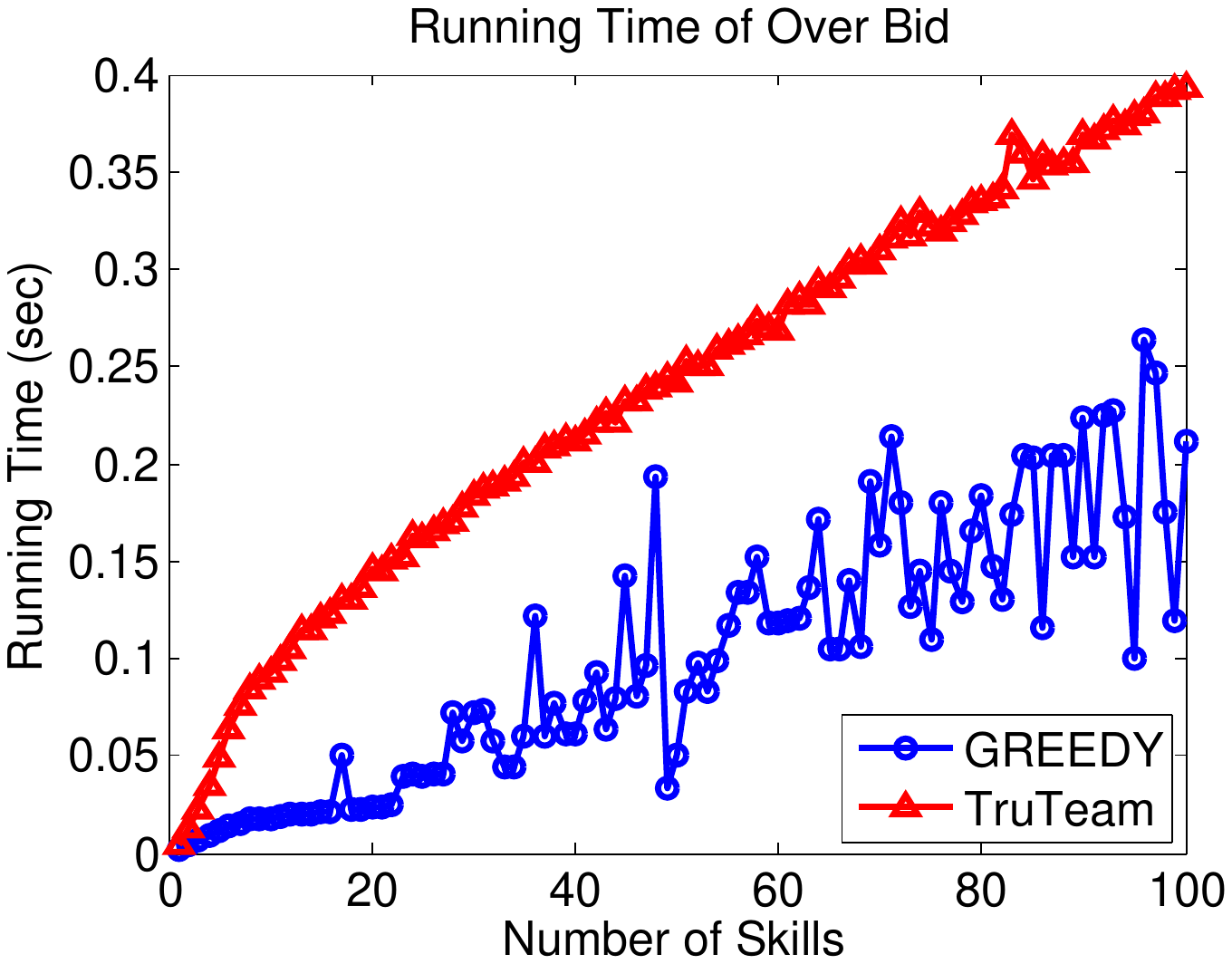}}
\\
\subfigure[Truthful bidding; $l=50$]{\label{fig:utility-big-worker-truthful}\includegraphics[trim=2mm 2mm 8mm 7mm,clip,angle=0,width=0.24\textwidth]{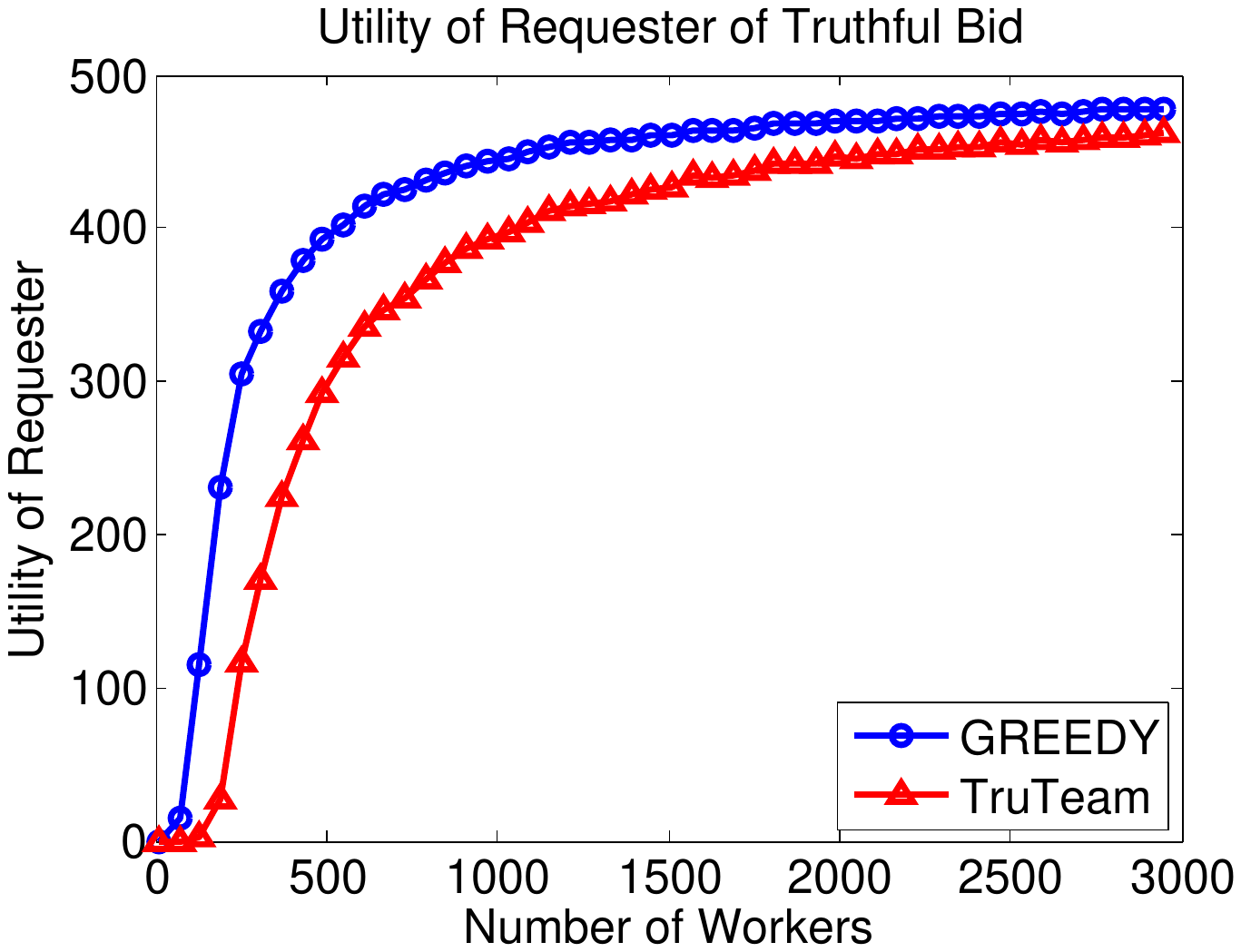}}
\hfill
\subfigure[Overbidding; $l=50$]{\label{fig:utility-big-worker-over}\includegraphics[trim=2mm 2mm 8mm 7mm,clip,angle=0,width=0.24\textwidth]{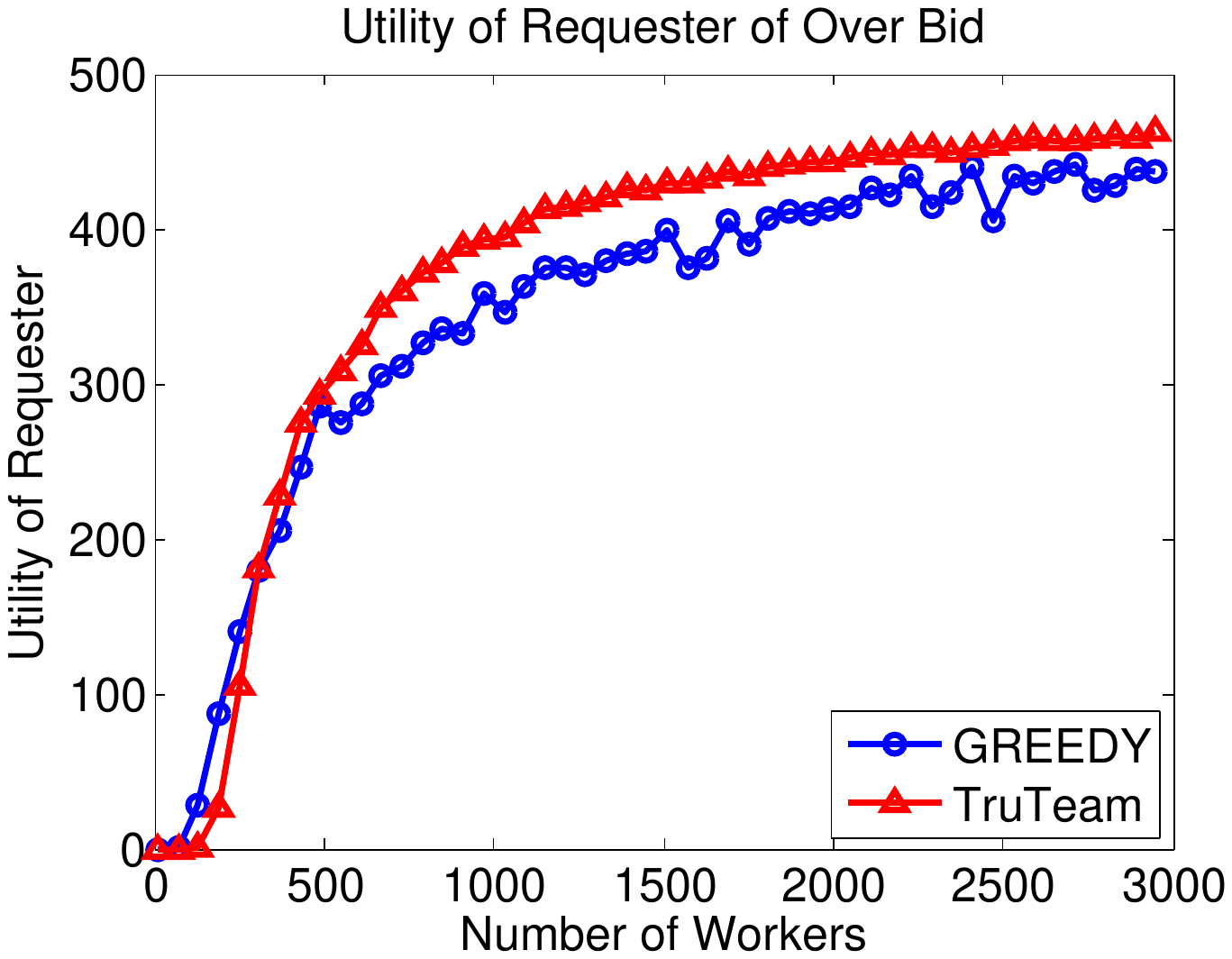}}
\hfill
\subfigure[Truthful bidding; $n=1000$]{\label{fig:utility-big-skill-truthful}\includegraphics[trim=2mm 2mm 8mm 7mm,clip,angle=0,width=0.24\textwidth]{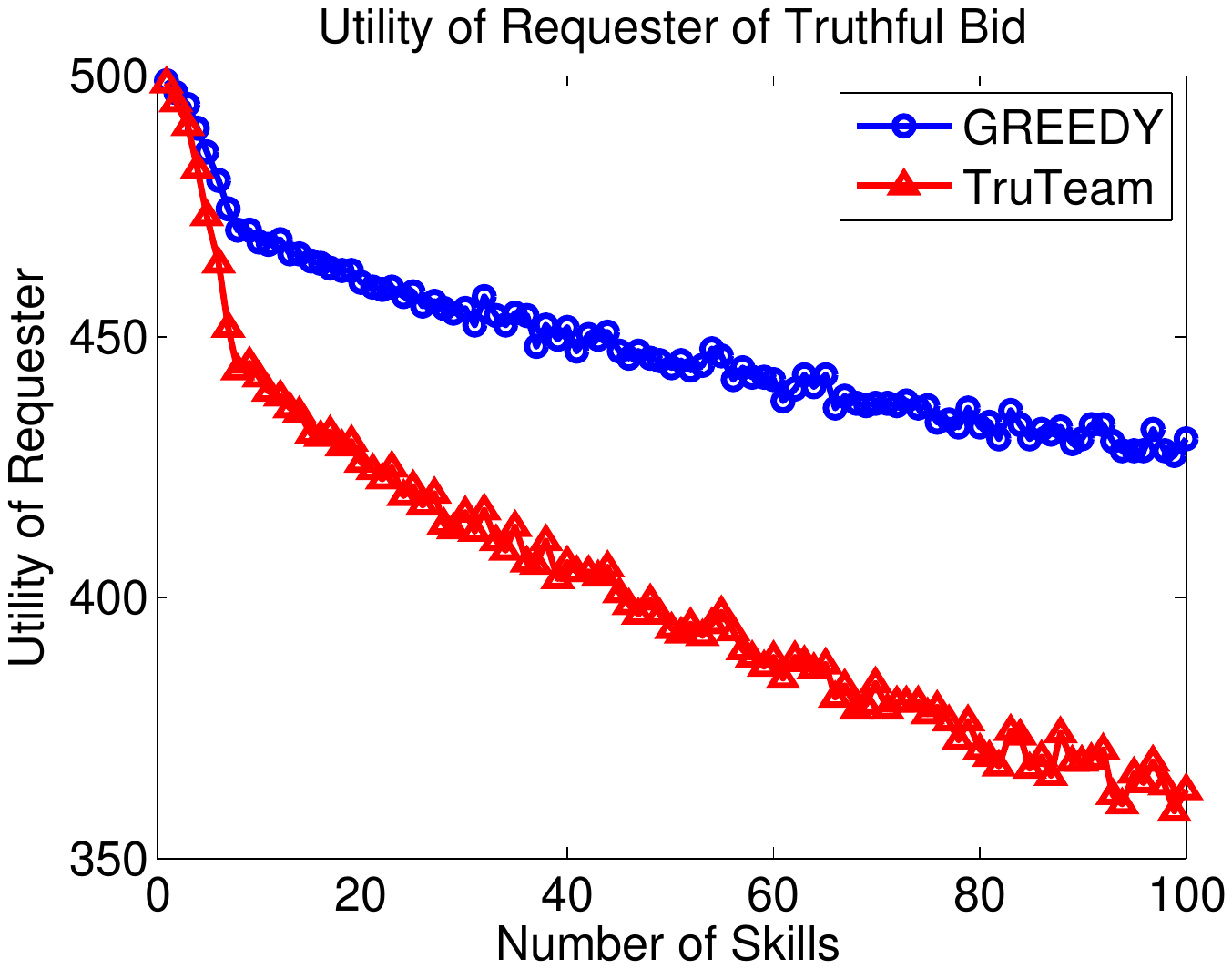}}
\hfill
\subfigure[Overbidding; $n=1000$]{\label{fig:utility-big-skill-over}\includegraphics[trim=2mm 2mm 8mm 7mm,clip,angle=0,width=0.24\textwidth]{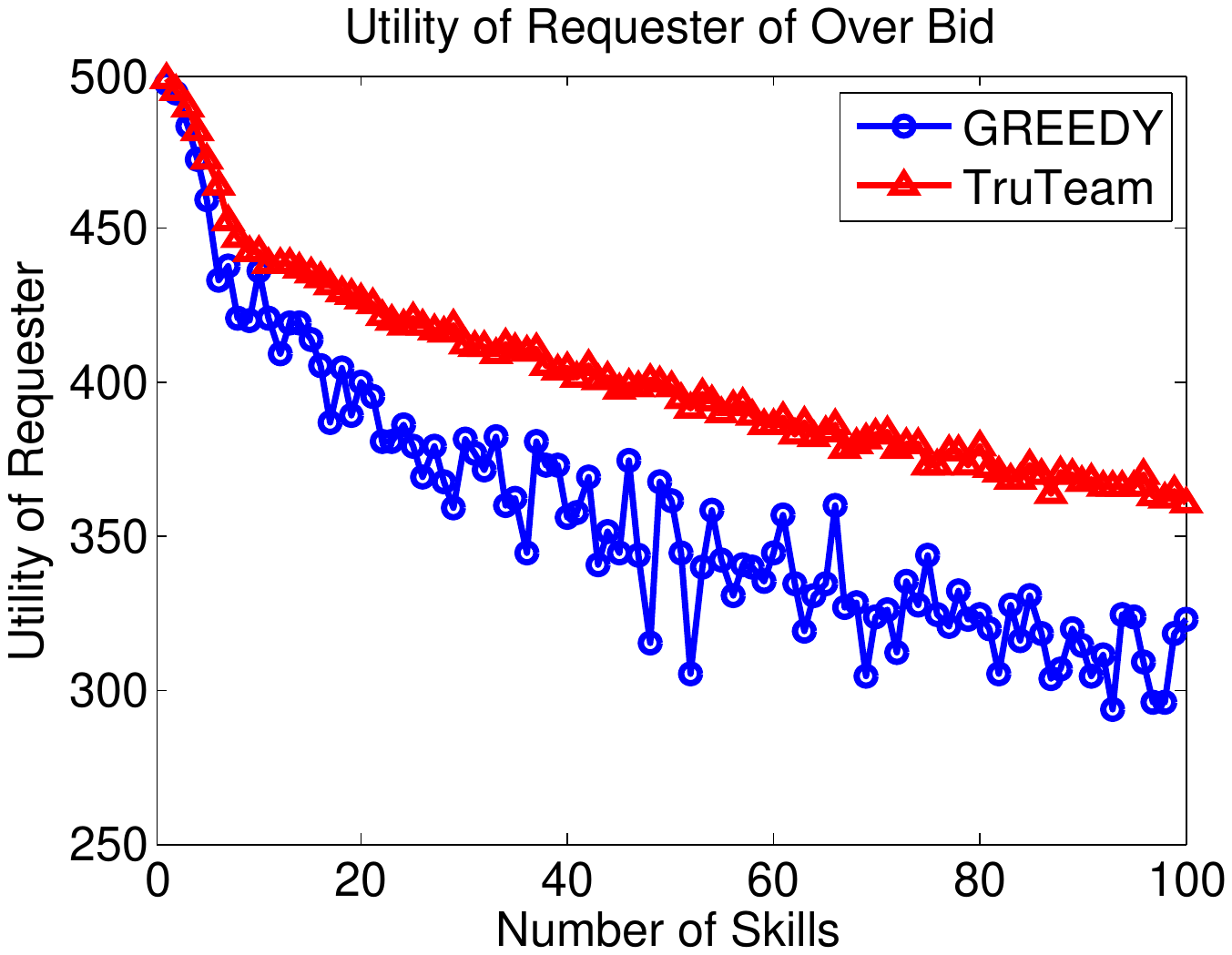}}

\caption{Simulation results on the \textsf{Large} Dataset.} \label{fig:Largedataset}
\end{figure*}

\subsubsection{Truthfulness}
Lastly, we verify the truthfulness of \textsf{TruTeam} by examining the utilities of two randomly chosen workers, $w_{637}$ and $w_{219}$. We set $n=1000$ and $l=50$, and their true costs are $c_{637}=4$ and $c_{219}=10$, respectively. In Fig.~\ref{fig:truthful-637}, we observe that $w_{637}$ is selected to perform the task if she bids her true cost $c_{637}=4$, and her utility reaches the optimal value $7$. If she overbids a value no less than 11, she is not selected and therefore her utility drops to 0. In Fig.~\ref{fig:truthful-219}, it is observed that $w_{219}$ is not selected to do the task if she bids her true cost $c_{219}=10$, and hence her utility is $0$. This is the optimal utility she can get because even though she can be selected to do the task, which only happens if she under bids (below 7), her payment will not be able to cover her true cost and hence she will receive a negative utility, as indicated in Fig.~\ref{fig:truthful-219}.

\textsf{TruTeam} ensures that it is every worker's \emph{dominant strategy} to bid her true cost in order to maximize her utility.

\begin{figure}
\subfigure[Utility of Worker $w_{637}$]{\label{fig:truthful-637}\includegraphics[trim=2mm 1.5mm 3mm 6mm,clip,angle=0,width=0.24\textwidth]{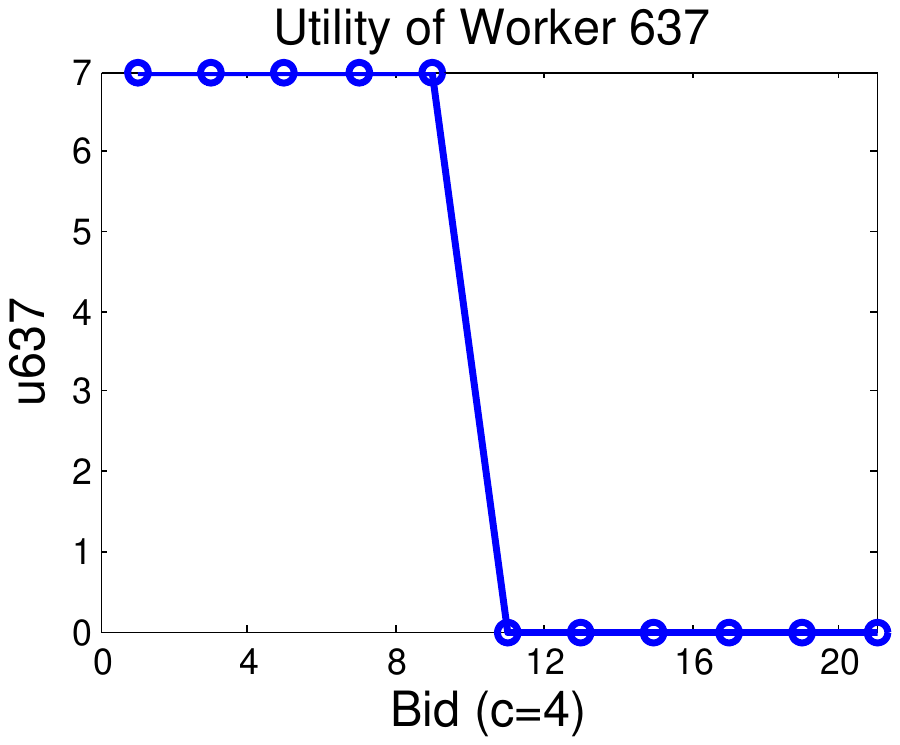}}
\hfill
\subfigure[Utility of Worker $w_{219}$]{\label{fig:truthful-219}\includegraphics[trim=2mm 1.5mm 3mm 6mm,clip,angle=0,width=0.24\textwidth]{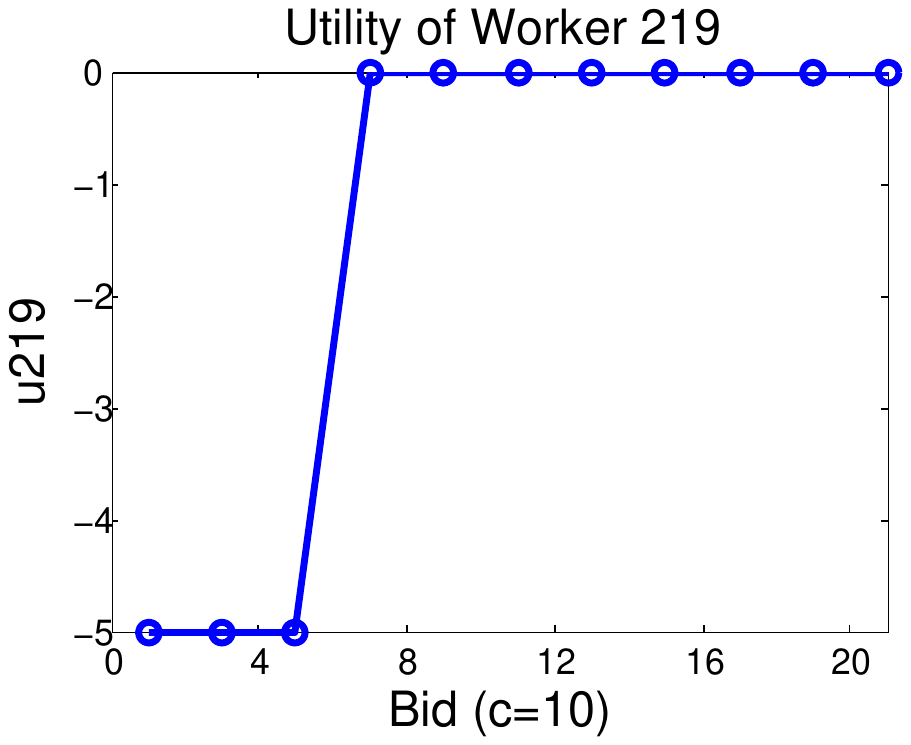}}
\caption{Workers' Utilities.} \label{fig:truthful}
\end{figure}


\section{Conclusion}
In this paper, we formulate team formation in crowdsourcing as a task allocation and pricing problem, and provide four candidate mechanisms as solutions: \textsf{OPT}, \textsf{GREEDY}, \textsf{VCG}, and \textsf{TruTeam}. We prove that although all the four mechanisms satisfy profitability and individual rationality, only \textsf{VCG} and \textsf{TruTeam} satisfy truthfulness, and only \textsf{GREEDY} and \textsf{TruTeam} are efficient. Simulation also demonstrate that \textsf{TruTeam} is the only one, among the four candidates, that is efficient, profitable, individually rational and truthful. 

To the best of our knowledge, this is the first study on team formation in collaborative crowdsourcing markets. Future research could be in the direction of taking into account the quality of contribution \cite{qcs13dcoss} and trustworthiness of workers \cite{sew14secon} when selecting and rewarding workers, or considering previous collaborations among workers and inter-dependency among multiple tasks when forming multiple teams.

\small{
\section*{Acknowledgment}
This work is supported by the French Ministry of European and Foreign Affairs under the STIC-Asia program, CCIPX project.}

%
%

%

%
\bibliographystyle{IEEEtran}
\bibliography{IEEEabrv,TruTeam}

\end{document}